\PassOptionsToPackage{unicode}{hyperref}
\PassOptionsToPackage{hyphens}{url}
\PassOptionsToPackage{dvipsnames,svgnames,x11names}{xcolor}
\documentclass[
  12pt]{article}
\usepackage{longtable}
\usepackage{bm}
\usepackage{booktabs}
\usepackage{multirow}
\usepackage{amsmath,amssymb}
\usepackage{subcaption}
\usepackage{amsthm}
\usepackage{iftex}
\usepackage{float}
\ifPDFTeX
  \usepackage[T1]{fontenc}
  \usepackage[utf8]{inputenc}
  \usepackage{textcomp} 
\else 
  \usepackage{unicode-math}
  \defaultfontfeatures{Scale=MatchLowercase}
  \defaultfontfeatures[\rmfamily]{Ligatures=TeX,Scale=1}
\fi
\usepackage{lmodern}
\ifPDFTeX\else  
\fi
\IfFileExists{upquote.sty}{\usepackage{upquote}}{}
\IfFileExists{microtype.sty}{
  \usepackage[]{microtype}
  \UseMicrotypeSet[protrusion]{basicmath} 
}{}
\makeatletter
\@ifundefined{KOMAClassName}{
  \IfFileExists{parskip.sty}{%
    \usepackage{parskip}
  }{
    \setlength{\parindent}{0pt}
    \setlength{\parskip}{6pt plus 2pt minus 1pt}}
}{
  \KOMAoptions{parskip=half}}
\makeatother
\usepackage{xcolor}
\makeatletter
\ifx\paragraph\undefined\else
  \let\oldparagraph\paragraph
  \renewcommand{\paragraph}{
    \@ifstar
      \xxxParagraphStar
      \xxxParagraphNoStar
  }
  \newcommand{\xxxParagraphStar}[1]{\oldparagraph*{#1}\mbox{}}
  \newcommand{\xxxParagraphNoStar}[1]{\oldparagraph{#1}\mbox{}}
\fi
\ifx\subparagraph\undefined\else
  \let\oldsubparagraph\subparagraph
  \renewcommand{\subparagraph}{
    \@ifstar
      \xxxSubParagraphStar
      \xxxSubParagraphNoStar
  }
  \newcommand{\xxxSubParagraphStar}[1]{\oldsubparagraph*{#1}\mbox{}}
  \newcommand{\xxxSubParagraphNoStar}[1]{\oldsubparagraph{#1}\mbox{}}
\fi
\makeatother

\usepackage{longtable,booktabs,array}
\usepackage{calc} 
\usepackage{etoolbox}
\makeatletter
\patchcmd\longtable{\par}{\if@noskipsec\mbox{}\fi\par}{}{}
\makeatother
\IfFileExists{footnotehyper.sty}{\usepackage{footnotehyper}}{\usepackage{footnote}}
\makesavenoteenv{longtable}
\usepackage{graphicx}
\makeatletter
\def\maxwidth{\ifdim\Gin@nat@width>\linewidth\linewidth\else\Gin@nat@width\fi}
\def\maxheight{\ifdim\Gin@nat@height>\textheight\textheight\else\Gin@nat@height\fi}
\makeatother
\setkeys{Gin}{width=\maxwidth,height=\maxheight,keepaspectratio}
\makeatletter
\def\fps@figure{htbp}
\makeatother

\makeatletter
\@ifpackageloaded{caption}{}{\usepackage{caption}}
\AtBeginDocument{%
\ifdefined\contentsname
  \renewcommand*\contentsname{Table of contents}
\else
  \newcommand\contentsname{Table of contents}
\fi
\ifdefined\listfigurename
  \renewcommand*\listfigurename{List of Figures}
\else
  \newcommand\listfigurename{List of Figures}
\fi
\ifdefined\listtablename
  \renewcommand*\listtablename{List of Tables}
\else
  \newcommand\listtablename{List of Tables}
\fi
\ifdefined\figurename
  \renewcommand*\figurename{Figure}
\else
  \newcommand\figurename{Figure}
\fi
\ifdefined\tablename
  \renewcommand*\tablename{Table}
\else
  \newcommand\tablename{Table}
\fi
}
\@ifpackageloaded{float}{}{\usepackage{float}}
\floatstyle{ruled}
\@ifundefined{c@chapter}{\newfloat{codelisting}{h}{lop}}{\newfloat{codelisting}{h}{lop}[chapter]}
\floatname{codelisting}{Listing}

\makeatother
\makeatletter
\@ifpackageloaded{caption}{}{\usepackage{caption}}
\@ifpackageloaded{subcaption}{}{\usepackage{subcaption}}
\makeatother

\ifLuaTeX
  \usepackage{selnolig}  
\fi
\usepackage[]{natbib}
\usepackage{bookmark}

\IfFileExists{xurl.sty}{\usepackage{xurl}}{} 
\hypersetup{
  pdftitle={A Bayesian Framework for Quantifying Association Between Functional and Structural Data in Neuroimaging},
  pdfauthor={Sakul Mahat; Sharmistha Guha; Jessica Bernard},
  pdfkeywords={High dimensional data, Bayesian Statistics, Brain connectivity, Network-Structural association, Variable selection, Latent factor models},
  colorlinks=true,
  linkcolor={blue},
  filecolor={Maroon},
  citecolor={Blue},
  urlcolor={Blue},
  pdfcreator={LaTeX via pandoc}}

\newcommand{\anon}{1}

\begin{document}

\def\spacingset#1{\renewcommand{\baselinestretch}%
{#1}\small\normalsize} \spacingset{1}


\if1\anon
{
  \title{\bf A Bayesian Framework for Quantifying Association Between Functional and Structural Data in Neuroimaging}
  \author{Sakul Mahat \thanks{
    The authors gratefully acknowledge \textit{the advanced computing resources provided by the Texas A\&M Department of Statistics Arseven Computing Cluster.}}\hspace{.2cm}\\
    Department of Statistics, Texas A\&M University\\
    and \\
    Sharmistha Guha \\
    Department of Statistics, Texas A\&M University\\
    and \\
    Jessica Bernard \\
    Department of Psychological \& Brain Sciences, Texas A\&M University}
  \maketitle
} \fi

\if0\anon
{
  \bigskip
  \bigskip
  \bigskip
  \begin{center}
    {\LARGE\bf Title}
\end{center}
  \medskip
} \fi



\bigskip
\begin{abstract}
Structural and functional neuroimaging modalities provide distinct yet complementary windows into brain organization. Structural imaging characterizes the anatomy and microstructure of neural tissue, elucidating the physical architecture that supports brain function. Functional imaging, in contrast, captures dynamic patterns of neural activity and connectivity, revealing how information is processed and transmitted across brain regions in real time. Together, these modalities offer a more complete picture of brain organization than either could provide alone. Recent advances in multimodal neuroimaging have led to a growing body of work on joint modeling of structural and functional data, with the goal of deepening our understanding of the neurological processes that arise from their interplay. Much of this literature operates under the assumption of a strong association between brain structure and function, and exploits this relationship to improve prediction and interpretability. However, relatively little attention has been given to developing rigorous, statistically principled frameworks for formally testing hypotheses about the strength and nature of these associations. Existing approaches typically rely on simple correlation-based measures or heuristic integration strategies, which may fail to capture the complex dependencies and structured variability inherent in neuroimaging data, particularly when functional data are represented as brain networks and structural data as region-specific anatomical measures.
In this work, we address this gap by developing an explicit Bayesian hypothesis testing framework for quantifying associations between structural and functional neuroimaging data. Our approach first constructs functional brain networks from fMRI data, then integrates them with structural measurements through a hierarchical Bayesian model derived from complementary modalities. The Bayesian formulation provides a coherent probabilistic framework that naturally accommodates two types of datasets with different structures, incorporates prior knowledge, and yields full posterior uncertainty quantification. Through extensive empirical studies, we demonstrate that the proposed method achieves excellent performance in detecting associations under a wide range of settings, including varying signal-to-noise ratios, different numbers of brain regions, and diverse sets of structural imaging measures. 
\end{abstract}


\noindent%
{\it Keywords:} Multimodal neuroimaging, Bayesian hypothesis testing, Brain connectivity, Structure-function association, Spike-and-slab priors, Latent factor models.
\vfill

\newpage
\spacingset{1.8} 



\section{Introduction}\label{sec-intro}
Structural and functional neuroimaging provide complementary perspectives on the organization and operation of the human brain. Structural modalities such as structural magnetic resonance imaging (sMRI) and diffusion tensor imaging (DTI) characterize gray matter morphology, cortical thickness, white matter integrity, and microstructural properties, yielding detailed maps of the brain’s physical architecture \citep{ashburner2000voxel}. Functional modalities, most notably functional MRI (fMRI), capture time-varying patterns of blood-oxygen-level-dependent (BOLD) signal fluctuations, which are used to infer neural activation and inter-regional coupling at rest or during tasks \citep{ogawa1990brain, smith2013functional}. Together, these modalities offer a more holistic view of brain organization: structural measures describe the anatomical scaffold, whereas functional data reveal the dynamic activity and communication/connectivity that this scaffold supports \citep{park2013structural}.

Over the last decade, there has been substantial interest in jointly modeling structural and functional neuroimaging data to investigate how brain architecture shapes functional dynamics. Correlation-based and network-analytic studies have shown that structural connectivity constrains functional interactions at both global and regional scales, with white matter pathways supporting and limiting patterns of functional connectivity \citep{honey2010can,hermundstad2013structural, honey2009predicting}. Multivariate statistical techniques, including multivariate regression, canonical correlation analysis, and related fusion methods, have been employed to identify shared components and joint patterns of variability across modalities \citep{sui2012review, calhoun2016multimodal}. Graph-theoretical approaches have further enabled comparison of structural and functional brain networks in terms of modularity, efficiency, and hub structure \citep{bullmore2009complex}. More recently, machine learning frameworks, including multi-kernel learning and deep neural network models, have been proposed to leverage cross-modal features for improved prediction of clinical or cognitive outcomes \citep{kawahara2017brainnetcnn, parisot2018disease}.

Within supervised learning, multimodal integration has been explored in scalar-on-image, image-on-scalar, and image-on-image regression frameworks, all aimed at fusing information across imaging modalities to address complex biomedical questions. Image-on-scalar regression treats imaging data (from multiple modalities) as responses and develops joint models that relate these outcomes to subject-level clinical and biological predictors. Recent work has, for example, jointly regressed structural and functional measures on brain-related phenotypes to identify regions associated with cognitive disorders, neurodegeneration, or aging trajectories \citep{gutierrez2025multiobject, rodriguez2025bayesian}.

In contrast, most supervised frameworks treat multimodal imaging as predictors rather than responses, leading to scalar-on-image regression settings. For example, \citet{xue2018bayesian, gutierrez2024regression, guha2024bayesian} develop methods for scalar-on-multimodal-predictor regression, allowing nonlinear relationships between a scalar response and high-dimensional imaging predictors. Image-on-image regression has also been considered to develop predictive models of one imaging modality based on others; for example, \citet{guo2022spatial, ma2023bayesian} propose Bayesian regression models that predict a single response image from multiple image predictors, providing a principled way to interpolate or impute imaging modalities.

Parallel to these statistical approaches, generative deep learning methods have attracted increasing attention for multimodal neuroimaging. Such models seek to synthesize structural or functional images conditional on covariates (e.g., diagnosis, age) or other imaging modalities, using architectures such as conditional GANs, variational autoencoders and Monte Carlo dropout architecture \citep{suzuki2017overview, lei2024inva, jeon2025deep, coombes2015weighted, bowles2018gan, wolterink2017deep}, leading to tools offering powerful predictive capabilities.

The overarching philosophy behind multimodal data integration is that shared information across modalities can be exploited to improve inference, prediction, or both. However, most existing frameworks are designed primarily for predictive modeling or for joint representation learning under an implicit assumption of association, rather than for formal inference on the strength and nature of structural–functional relationships. As a result, conclusions about structure–function coupling are frequently based on heuristic thresholds (e.g., magnitude of loadings, empirical correlations) rather than on statistically principled probabilistic measures of association.

Several studies have attempted to quantify structure–function relationships using simple correlations, similarity metrics, or low-rank linear projections \citep{honey2010can, hermundstad2013structural, mivsic2016network}. While informative, these approaches can be limited in their ability to capture nonlinear and higher-order dependencies, especially when functional data are graph-valued (network connectivity) and structural data are heterogeneous (e.g., a mix of volumetric, cortical thickness, and diffusion-based metrics). Moreover, many existing methods for testing cross-modal association vectorize each modality \citep{sui2009canonical, correa2010canonical}, thereby ignoring the inherent spatial, network, or topological structure of the data and potentially reducing power or interpretability.

Motivated by these gaps, we develop a fully Bayesian framework to quantify the strength of association between structural and functional  data. The central idea is to represent functional brain connectivity as a graph whose nodes correspond to anatomically defined regions of interest (ROIs) and whose edges capture pairwise correlations in fMRI time series between ROIs. This graph representation preserves the topological structure of functional interactions while allowing integration with region-specific structural variables, such as ROI-level measures extracted from T1- and T2-weighted structural MRI scans. Our model introduces a shared set of latent vectors that simultaneously capture two key sources of dependence: (a) the transitivity and community structure inherent in the functional connectivity network, reflecting the tendency of ROIs with strong mutual connections to also connect to common neighbors; (b) the covariation among structural measures across ROIs, enabling joint modeling of anatomy and connectivity within a unified latent space. By coupling the latent representation of structural and functional modalities, our approach provides a coherent Bayesian framework for testing associations at multiple levels. Specifically, it enables (i) ROI-level inference, where each structural variable can be tested for association with the corresponding node’s connectivity profile in the functional graph, and (ii) global inference, which assesses the overall strength of association between the entire set of structural and functional measurements. The Bayesian formulation not only facilitates principled testing but also yields full posterior uncertainty quantification, ensuring interpretable inference.

\section{Multi-Modal Imaging Data To Study Brain Functional \& Structural Association}\label{sec-dataintro}

\noindent\underline{\textbf{Data Collection:}} This article focuses on a multi-modal human neuroimaging data from the \textbf{Lifespan Cognitive and Motor Neuroimaging Laboratory} at Texas A\&M University, College Station. The data set, collected first-hand by co-author Bernard, originates from a longitudinal study of 138 healthy adults (age 35-86, mean age 57, 54\% female) tracking changes in brain and behavior with age. The final analytical cohort included 126 participants after removing individuals with incomplete or low-quality data. For each participant, the dataset comprises functional and structural/anatomical neuroimaging scans, together with subject-specific clinical and behavioral assessments. Eligibility for this study was contingent on the absence of specific medical conditions. Subjects were deemed ineligible if they had a documented history of neurological or psychiatric disorders (including stroke, depression, or anxiety). Exclusion was also based on contraindications to the magnetic resonance imaging environment, such as metallic implants or an inability to tolerate the procedure. 

Participants were scanned using a 3.0 Tesla Siemens Magnetom Verio scanner fitted with a 32-channel head coil. The collection of high-resolution \textbf{structural images} involved a 7-minute T1-weighted MPRAGE sequence and a 5.5-minute T2-weighted sequence, both providing 0.8 mm\textsuperscript{3} isotropic voxels and accelerated with a multiband factor of 2 (TRs of 2400 ms and 3200 ms, respectively). Following scans to obtain brain structural images, resting-state brain \textbf{functional} activity was measured through four separate blood oxygen level dependent (BOLD) fMRI runs, each using a rapid sampling rate (TR = 720 ms) and high acceleration (multiband = 8) to capture 488 brain volumes with 2.5 mm³ voxels. The imaging parameters were adapted from established standards set by the Human Connectome Project (\cite{Harms_2018}) and the University of Minnesota's Center for Magnetic Resonance Research to ensure alignment with best practices for data sharing and reproducibility. The preprocessing pipeline mirrored the approach established in recent work on this dataset by \cite{Hicks_2023} and \cite{Ballard_2022}. Initially, the raw DICOM images were transformed into the NIFTI format and structured according to the Brain Imaging Data Structure (BIDS, version 1.6.0) using the \texttt{bidskit} docker container. To enable distortion correction, B0 field maps were generated by extracting a single volume from two BOLD images with opposing phase-encoding directions, a task performed with the FSL split tool (\cite{Jenkinson_2012}). Finally, the complete dataset underwent comprehensive anatomical and functional preprocessing via fMRIPrep (\url{https://fmriprep.org/}), an automated workflow that prepares neuroimaging data for statistical analysis.

\noindent\underline{\textbf{Brain Functional and Structural Data:}}
The regions of interest (ROIs) were defined according to six established large-scale brain networks following \cite{Jackson_2023}: the default mode, frontoparietal, control, emotion, motor, and a subcortical cerebellar–basal ganglia network. From these networks, a set of 69 ROIs was delineated to ensure that the functional data were spatially aligned with the corresponding structural segmentations. For each participant, BOLD fMRI time series were extracted from these 69 ROIs (illustrated in Figure~\ref{fig:General_workflow_paper}), yielding the functional neuroimaging measurements. Structural information on the volume of each ROI was derived from T1- and T2-weighted sequences, which were likewise summarized at each ROI. The functional Montreal Neurological Institute (MNI) coordinates reported in \cite{Jackson_2023} were anatomically labeled using MRIcroGL by mapping them to regions in the Automated Anatomical Labeling (AAL) atlas \citep{TZOURIOMAZOYER2002273}.

\noindent\underline{\textbf{Subject-Level Measures:}}
 To quantify fine motor function, participants completed the Purdue Pegboard task \citep{lawson2019purdue}. This test involved manipulating pegs, washers, and cylinders on a pegboard under four conditions, each repeated four times. The conditions included: (1) placing pegs with the right hand, (2) with the left hand, (3) with both hands simultaneously, and (4) an assembly task requiring alternating hands. Performance was scored by averaging the number of items completed in each condition. A total score across all conditions was then calculated for each participant to create a single behavioral measure, referred to as the \textbf{Aggregate Pegboard Score (APS)}. 

 \noindent \underline{\textbf{Aims of the Data Analysis.}} The primary objective of this article is to develop and apply a statistical framework that evaluates the overall association between structural and functional imaging modalities, while adjusting for a behavioral predictor (APS). As a structural measure, we use the volume for each ROI, and functional data are used to construct a brain network over ROIs. Within this framework, we seek to quantify the strength of structure-function associations in a probabilistic fashion, that is, to estimate the posterior probability of association between the two modalities after accounting for subject-level covariates. This probability-of-association formulation yields a direct and interpretable measure of evidence for structure-function coupling at both global and feature-specific scales.
 
\section{Model Formulation}\label{sec-modelformulation}

\subsection{Notations}
Consider a brain parcellation into $V$ regions of interest (ROIs) $\mathcal{R}_1,...,\mathcal{R}_V$ defined according to a given brain atlas. For subject $i=1,...,N$, let $\{f_{i,1}(t),...,f_{i,V}(t)\}$, $t=t_1,...,t_T$, denote the BOLD fMRI time series recorded from each ROI. We construct the functional connectivity graph for subject $i$ denoted $\mathbf{A}_{i}\in\mathbb{R}^{V\times V}$, where the $V$ nodes correspond to the ROIs. The edge weight $a_{i,v,v'}$ quantifies the strength of association between regions $\mathcal{R}_v$ and $\mathcal{R}_{v'}$ and is given by the Fisher $z$-transformed Pearson correlation between the time series $(f_{i,v}(t_1),...,f_{i,v}(t_T))^T$ and $(f_{i,v'}(t_1),...,f_{i,v'}(t_T))^T$. By construction, the connectivity graph is undirected and has no self-loops, so \( a_{i,v,v'} = a_{i,v',v} \) and \( a_{i,v,v} = 0 \). Similarly, let \( \mathbf{Y}_i \in \mathbb{R}^{V \times P} \) represent the structural data for subject \( i \), where $y_{i,v,k}$ is the measurement of $k$th structural variable ($k=1,...,P$) at ROI $\mathcal{R}_v$. Thus, for each subject, we observe two complementary data sources, the functional connectivity matrix $\mathbf{A}_i$ encoding pairwise associations between ROIs; and the structural feature matrix $\mathbf{Y}_i$, containing multiple anatomical or microstructural measures for each ROI. We also record a covariate $x_i \in \mathbb{R}$ representing a subject-specific behavioral characteristic, namely the APS.

This section presents the proposed modeling framework, including the latent space formulation for graph connectivity and structural variables, the prior structure, the inferential quantities used to assess associations, and details of posterior computation.
\subsection{Latent Space Model for Brain Connectivity Graph and Structural Variables}
We adopt a conditionally random effects model with shared latent effects to characterize the association between brain connectivity graphs and structural variables. The key idea is to introduce ROI-specific latent factors that drive both brain connectivity and structural variation, thereby creating a principled mechanism for linking the two modalities and testing association.

\noindent\textbf{Brain Connectivity Graph Model.}
The observed connectivity $a_{i,v,v'}$ between regions \( \mathcal{R}_v \) and \( \mathcal{R}_{v'} \) for the $i$th subject is modeled as a function of ROI-specific latent variables:
\begin{equation}
    a_{i,v,v'} = \mu_i + x_i \beta_{x,G} + h(\boldsymbol{\eta}_{i,v}, \boldsymbol{\eta}_{i,v'}, \boldsymbol{\lambda}) + \epsilon_{i,v,v'}, 
    \qquad \epsilon_{i,v,v'} \sim N(0, \sigma^2),
    \label{eq:network_model}
\end{equation}
where $\mu_i$ is the subject-specific intercept, $\beta_{x,G} \in \mathbb{R}$ denotes the coefficient for the subject-specific behavioral covariate (APS), $\boldsymbol{\eta}_{i,v} = (\eta_{i,v,1}, \dots, \eta_{i,v,R})^T$ and $\boldsymbol{\eta}_{i,v'} = (\eta_{i,v',1}, \dots, \eta_{i,v',R})^T$ are the $R$-dimensional latent factors corresponding to the ROIs $\mathcal{R}_v$ and $\mathcal{R}_{v'}$, respectively, $\boldsymbol{\lambda} = (\lambda_1, \dots, \lambda_R)^T$ is a vector of inclusion variables, with $\lambda_r = 0$ signifying exclusion of the $r$-th latent dimension from explaining connectivity, $h(\cdot)$ encodes the functional form through which ROI-level latent factors and inclusion indicators shape the edge strengths, and $\sigma^2$ is the variance of idiosyncratic errors.

\noindent \textbf{Structural Variables Model.} The structural data are modeled in a way that shares ROI-specific latent factors with the brain connectivity model, thus enabling direct assessment of structural–functional associations. For subject $i$, the structural measurement of variable $k$ at ROI $\mathcal{R}_v$, denoted by $y_{i,v,k}$, is given by,
\begin{equation}
    y_{i,v,k} = \theta_{i,k} + x_i \beta_{x,S} + g(\boldsymbol{\eta}_{i,v}, \boldsymbol{\alpha}_{i,k}, \boldsymbol{\omega}_k) + \delta_{i,v,k}, 
    \qquad \delta_{i,v,k} \sim N(0, \phi_k^2), 
    \label{eq:structural_model}
\end{equation}

where $\theta_{i,k}$ is the subject- and variable-specific intercept, 
$\beta_{x,S} \in \mathbb{R}$ denotes the coefficient for the subject-specific behavioral covariate (APS), 
$\boldsymbol{\alpha}_{i,k} = (\alpha_{i,k,1}, \dots, \alpha_{i,k,R})^T$ is a structural variable-specific latent factor for each subject, 
and $\boldsymbol{\omega}_k = (\omega_{k,1}, \dots, \omega_{k,R})^T$ are the inclusion variables for the dimensions of latent factors. 
Here, $g(\cdot)$ defines how ROI-level latent factors, variable-specific latent factors, and inclusion indicators combine to explain variability in $y_{i,v,k}$, 
and $\phi_k^2$ is the variance of idiosyncratic errors. 
By sharing the ROI-specific latent factors $\boldsymbol{\eta}_{i,v}$ across the two models, the framework directly links structural variables with functional connectivity, 
thereby facilitating both ROI-level and global association tests across modalities, as elaborated in the following sections.

\noindent\textbf{Choice of $h(\cdot)$ and $g(\cdot)$.} To characterize the relationship between latent factors and the observed data, we adopt a bilinear interaction structure \citep{athreya2018statistical, guha2021bayesian} for both functions. Specifically,

\begin{equation}\label{eq:bil}
h(\boldsymbol{\eta}_{i,v}, \boldsymbol{\eta}_{i,v'}, \boldsymbol{\lambda}) = \sum_{r=1}^R \eta_{i,v,r}\eta_{i,v',r}\lambda_r, \quad g(\boldsymbol{\eta}_{i,v}, \boldsymbol{\alpha}_{i,k}, \boldsymbol{\omega}_k) = \sum_{r=1}^R \eta_{i,v,r}\alpha_{i,k,r}\omega_{k,r},
\end{equation}

In the graph model, the bilinear construction of $h(\cdot)$ is well-suited to capture important topological features of graphs such as transitivity, clustering, and community organization, which are hallmark properties of brain connectivity graphs. In the structural model, $g(\cdot)$ allows structural variables to be explained through the alignment of variable-specific latent factors with the same latent dimensions that govern connectivity. This enables the framework to jointly represent how anatomical measurements covary with graph structure across subjects. The bilinear structure in (\ref{eq:bil}) naturally ensures that if $\lambda_r=0$, $r$-th latent dimension does not contribute to graph connectivity, while if $\omega_{k,r}=0$ the same dimension does not explain variability in the $k$-th structural variable.


\subsection{Hierarchical Prior Specification}
To complete the model formulation, we specify priors for latent factors, regression coefficients, and inclusion variables. All priors are chosen to balance flexibility, sparsity, and computational tractability.

All latent factors, baseline intercepts, and coefficients associated with demographic and clinical covariates are assigned Gaussian priors, thereby allowing the observed data to drive inference on subject- and region-specific effects. In particular, the intercept for the graph model $\mu_i$, and the structural variable-specific intercepts $\theta_{i,k}$ for each subject are given $N(0,1)$ priors. ROI-specific latent factors $\boldsymbol{\eta}_{i,v}$, as well as structural variable-specific latent factors $\boldsymbol{\alpha}_{i,k}$ are assigned $N(\mathbf{0}, \sigma_\eta^2 \mathbf{I}_R)$ and $N(\mathbf{0}, \sigma_\alpha^2 \mathbf{I}_R)$ priors, respectively. The coefficients for the behavioral covariate (APS), $\beta_{x,G}$ and $\beta_{x,S}$, each follow a standard normal prior.

A central feature of the model is the ability to automatically select relevant latent dimensions that explain structural–functional associations. To this end, we place spike-and-slab priors on the inclusion variables.

\textbf{Prior on graph model inclusion variable.} Each interaction parameter $\lambda_r$ is expressed as
\begin{equation}
 \lambda_r \sim \gamma_r N(0, \tau_r^{-1}) + (1 - \gamma_r) \delta_0,    
\end{equation}
where $\gamma_r \in \{0, 1\}$ is the inclusion indicator for the $r$-th latent dimension. The prior distribution $\gamma_r \sim \text{Bernoulli}(\pi_\lambda)$ determines whether a dimension contributes to connectivity. Precision parameters are drawn from $\tau_r \sim \text{Gamma}(q^{3(r-1)}, q^{2(r-1)})$, which encourages shrinkage of higher-order latent dimensions.

\textbf{Prior on structural model inclusion variable.} Analogously, each structural variable activates a subset of latent dimensions through
\begin{equation}
\omega_{k,r} \sim \psi_{k,r} N(0, \nu_{k,r}^{-1}) + (1 - \psi_{k,r}) \delta_0,   
\end{equation}
with inclusion indicators $\psi_{k,r} \sim \text{Bernoulli}(\pi_{\psi,k})$. The precision terms follow $\nu_{k,r} \sim \text{Gamma}(q^{3(r-1)}, q^{2(r-1)})$, allowing structural modalities to selectively engage relevant latent dimensions, creating flexibility across variables and subjects.

\textbf{Hyperpriors for mixing probabilities and variance parameters.}
The inclusion probabilities are given Beta hyperpriors, $\pi_\lambda \sim \text{Beta}(a_\lambda, b_\lambda)$ and $\pi_{\psi,k} \sim \text{Beta}(a_\psi, b_\psi)$, which adaptively regulate sparsity both at the latent dimension level and across structural modalities. All variance components, $\sigma^2$, $\phi_k^2$, $\sigma_\eta^2, \sigma_\alpha^2$, are assigned conjugate Inverse-Gamma priors, $\text{IG}(a,b)$.


\subsection{Global and Local Association between Brain Connectivity Graph and Structural Variables}

A central scientific objective is to quantify the extent of shared latent structure between graph connectivity and structural modalities. To this end, we consider two levels of testing: (a) \emph{Local association testing} evaluates the degree of association between each individual structural variable and the graph, thereby identifying structural features that share latent dimensions with functional connectivity; (b) \emph{Global association testing} evaluates the joint association across all structural variables and the graph, thereby quantifying the overall strength of structural-functional dependence.

\textbf{Local association testing.} Let $\boldsymbol{\gamma} = (\gamma_1, \dots, \gamma_R)^T$ denote the $R$-dimensional vector of inclusion indicators corresponding to latent dimensions in the graph model, and let $\boldsymbol{\psi}_k = (\psi_{k,1}, \dots, \psi_{k,R})^T$ denote the corresponding inclusion indicators for the $k$-th structural variable. We define the local association metric for the structural variable $k$ as
\begin{equation}
A_{L,k}(\boldsymbol{\gamma}, \boldsymbol{\psi}_k) = \boldsymbol{\gamma}^T \boldsymbol{\psi}_k.
\end{equation}
This simple inner product measures the number of latent dimensions jointly active in both the graph model ($\gamma_r = 1$) and the $k$-th structural variable ($\psi_{k,r} = 1$). Hence, a larger value of the posterior probability $P(A_{L,k}(\boldsymbol{\gamma}, \boldsymbol{\psi}_k) > 0 | \text{data})$ indicates that the $k$-th structural variable shares more latent structure with the functional graph, while $P(A_{L,k}(\boldsymbol{\gamma}, \boldsymbol{\psi}_k) > 0 | \text{data}) = 0$ implies that no dimensions are shared. Local testing thus pinpoints which structural variables drive functional connectivity through overlapping latent representations.

\begin{figure}[H]
    \centering
    \includegraphics[width=0.9\textwidth]{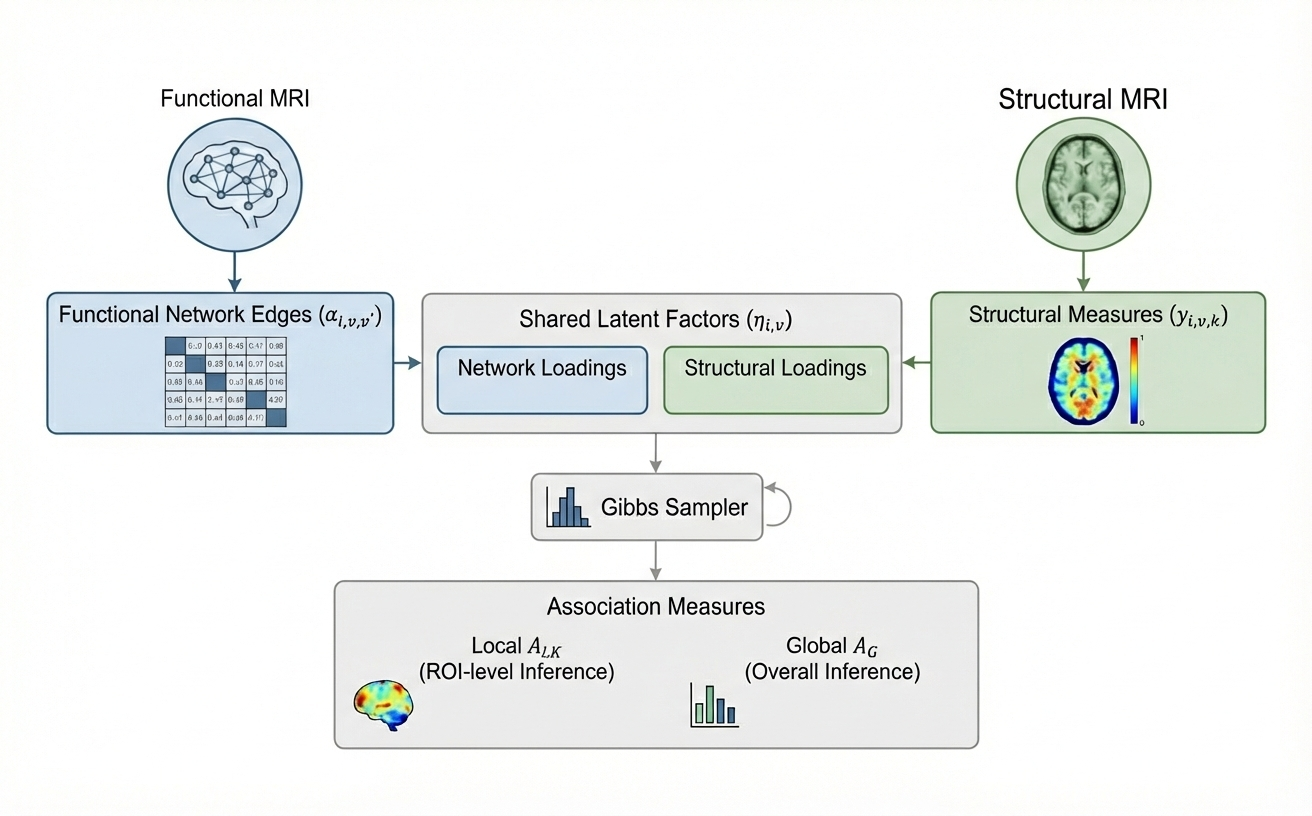}
     \caption{Overview of the proposed Bayesian framework. Functional MRI time series are summarized as functional connectivity networks with edges $a_{i,v,v'}$ (connecting ROIs $\mathcal{R}_v$ and $\mathcal{R}_{v'}$), and structural MRI data provide ROI-wise structural measures $y_{i,v,k}$ at ROI $\mathcal{R}_v$. Both modalities are linked through shared latent factors $\eta_{i,v}$. A Gibbs sampler is used to obtain the joint posterior over all parameters, from which we derive a global association measure $A_G$ (overall coupling across modalities) and local association measures $A_{L,k}$ (modality-specific structure--function coupling).}
    \label{fig:General_workflow_paper}
\end{figure}

\textbf{Global association testing.} Let $\boldsymbol{\Psi}_r$ denote a $P \times P$ diagonal matrix with the $k$-th diagonal entry $\psi_{k,r}$, and let $\delta_{\psi,r} = \det(\mathbf{I} - \boldsymbol{\Psi}_r)$ denote the indicator of the residual structure after excluding latent dimension $r$ from all structural variables. Collecting these across dimensions, define $\boldsymbol{\delta}_\psi = (\delta_{\psi,1}, \dots, \delta_{\psi,R})^T$. The global association metric is then defined as
\begin{equation}
    A_{G}(\boldsymbol{\gamma}, \boldsymbol{\psi}) = \boldsymbol{\gamma}^T (\mathbf{1}_R - \boldsymbol{\delta}_{\psi}).    
\end{equation}
This metric counts the number of latent dimensions that are simultaneously active in the graph model ($\gamma_r = 1$) and in at least one structural variable model ($\psi_{k,r} = 1$, for some $k$). A value of $P(A_G > 0 | \text{data}) = 0$ indicates modality-specific structure with no shared latent factors, while $P(A_G > 0 | \text{data}) > 0$ implies the presence of shared latent dimensions that jointly explain functional connectivity and structural attributes.

\subsection{Identifiability Considerations}

As is typical in latent variable models, the latent factors $\boldsymbol{\eta}_{i,v}$ and $\boldsymbol{\alpha}_{i,k}$ are identifiable only up to orthogonal transformations, and parameters such as $\lambda_r$ and $\omega_{k,r}$ are identifiable only up to sign and scale. However, the $A_G$ and $A_{L,k}$ statistics, depending only on binary inclusion indicators, are invariant to such transformations and remain interpretable. Proposition 1 describes a result confirming this fact.

\newtheorem{proposition}{Proposition}

\begin{proposition}[Invariance of $A_G$ and $A_{L,k}$]
The statistics $A_{L,k}$ and $A_G$ are invariant under signed permutation and scaling transformations.
\end{proposition}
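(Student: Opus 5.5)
We first pin down the group of transformations and how it acts on the parameters, and then check that the binary indicators are carried along consistently. A signed permutation with scaling is $\mathbf{Q} = \mathbf{P}\mathbf{D}$, where $\mathbf{P}$ is an $R\times R$ permutation matrix (permutation $\pi$ of $\{1,\dots,R\}$) and $\mathbf{D}=\mathrm{diag}(d_1,\dots,d_R)$ with every $d_r\neq 0$. It acts on the latent factors by $\tilde{\boldsymbol{\eta}}_{i,v} = \mathbf{Q}\boldsymbol{\eta}_{i,v}$ and $\tilde{\boldsymbol{\alpha}}_{i,k} = \mathbf{Q}\boldsymbol{\alpha}_{i,k}$. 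For each $r$ choose scalars $c_r, e_{k,r}\neq 0$ and set
\[
\tilde{\lambda}_{\pi(r)} = \lambda_r/(d_r^2), \qquad \tilde{\omega}_{k,\pi(r)} = \omega_{k,r}/(d_r^2).
\]
More generally, $\eta$ and $\alpha$ may be scaled separately, with compensating factors placed in $\lambda$ and $\omega$. The first step is to verify directly from (\ref{eq:bil}) that $h$ and $g$ are unchanged. Each summand $\eta_{i,v,r}\eta_{i,v',r}\lambda_r$ is mapped to the summand indexed by $\pi(r)$ with the same value. The sums over $r$ are therefore the same, and so is the likelihood (\ref{eq:network_model})--(\ref{eq:structural_model}).

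The second step is to identify the induced action on the inclusion indicators. Under the spike-and-slab priors, $\gamma_r=1$ exactly when $\lambda_r\neq 0$ almost surely, and $\psi_{k,r}=1$ exactly when $\omega_{k,r}\neq0$. Multiplying by a nonzero scalar preserves whether a coordinate is zero. The transformed indicators are therefore $\tilde{\gamma}_{\pi(r)}=\gamma_r$ and $\tilde{\psi}_{k,\pi(r)}=\psi_{k,r}$, so that $\tilde{\boldsymbol{\gamma}}=\mathbf{P}\boldsymbol{\gamma}$ and $\tilde{\boldsymbol{\psi}}_k=\mathbf{P}\boldsymbol{\psi}_k$. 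Signs and scales thus disappear entirely at the level of the indicators, and only the permutation survives.

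The third step is a computation. For the local statistic, orthogonality of $\mathbf{P}$ gives
\[
\tilde{\boldsymbol{\gamma}}^T\tilde{\boldsymbol{\psi}}_k=\boldsymbol{\gamma}^T\mathbf{P}^T\mathbf{P}\boldsymbol{\psi}_k=\boldsymbol{\gamma}^T\boldsymbol{\psi}_k .
\]
For the global statistic, note that $\delta_{\psi,r}=\prod_k(1-\psi_{k,r})$, which depends only on column $r$ of the indicators. Hence $\tilde{\delta}_{\psi,\pi(r)}=\delta_{\psi,r}$, i.e. $\tilde{\boldsymbol{\delta}}_\psi=\mathbf{P}\boldsymbol{\delta}_\psi$. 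Since $\mathbf{P}\mathbf{1}_R=\mathbf{1}_R$, we get
\[
\tilde{A}_G=\boldsymbol{\gamma}^T\mathbf{P}^T\mathbf{P}(\mathbf{1}_R-\boldsymbol{\delta}_\psi)=A_G .
\]
These are pointwise identities on parameter space, so the posterior laws of $A_{L,k}$ and $A_G$ are also unchanged, and with them the probabilities $P(A_{L,k}>0\mid\text{data})$ and $P(A_G>0\mid\text{data})$.

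The main obstacle is the second step: making precise that the indicators transform as claimed, given that $\gamma_r$ and $\lambda_r$ are distinct parameters. I would handle this by working with the a.s. identity $\gamma_r=\mathbf{1}(\lambda_r\neq0)$, which holds because the slab is continuous. On this basis we can either define the transformation on the indicators as the induced one, or state the result as invariance of the functionals $\boldsymbol{\lambda}\mapsto\mathbf{1}(\boldsymbol{\lambda}\neq0)$. A related subtlety concerns permutations. The priors on $\tau_r,\nu_{k,r}$ are not exchangeable in $r$, so the prior is not permutation-invariant. I would state explicitly that the proposition concerns invariance of the statistics as functions of the parameters, which is the relevant notion of interpretability under non-identifiability. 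I would not claim invariance of the prior.
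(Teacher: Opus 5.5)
Your proposal is correct and follows the paper's route. The paper's proof likewise rests on three observations: $A_{L,k}$ and $A_G$ depend only on the zero/nonzero pattern of $(\lambda_r,\omega_{k,r})$; nonzero scaling preserves that pattern; and a permutation merely relabels $r$ in sums over $r$. You make the group action and the induced map $\tilde{\boldsymbol{\gamma}}=\mathbf{P}\boldsymbol{\gamma}$, $\tilde{\boldsymbol{\psi}}_k=\mathbf{P}\boldsymbol{\psi}_k$ explicit, where the paper leaves them implicit. The scalars $c_r, e_{k,r}$ you introduce are never used and can be dropped.
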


\begin{proof}
Note that, $A_{L,k}=\sum_{r=1}^R \gamma_r\psi_{k,r}$ and 
$A_G=\sum_{r=1}^R\gamma_r[1-\prod_{k=1}^P(1-\psi_{k,r})]$. Therefore, both statistics depend on the zero-nonzero structure of the pairs $(\lambda_r,\omega_{k,r})$. Permutation only relabels coordinates of $\boldsymbol{\gamma}$ and $\boldsymbol{\psi}_{k}$ vectors; since $A_{L,k}$ and $A_G$ are summations over all $r$, their values are invariant to reordering. Similarly, scaling by $c\neq 0$ preserves the zero-nonzero structure of the pair $(\lambda_r,\omega_{k,r})$.

\end{proof}


\subsection{Posterior Computation}
The posterior distributions of the proposed model parameters are analytically intractable, but conditional distributions for each block of parameters are available in closed form due to the conjugate hierarchical prior structure. This enables efficient Gibbs sampling for posterior inference, where parameters are updated iteratively from their full conditional distributions. The explicit forms of these full conditionals are presented in Appendix~A.

After discarding burn-in iterations, we obtain posterior samples $\{\boldsymbol{\gamma}^{(1)}, \dots, \boldsymbol{\gamma}^{(F)}\}$ and $\{\boldsymbol{\psi}_{k}^{(1)}, \dots, \boldsymbol{\psi}_k^{(F)}\}$, $k=1, \dots, P$, for the latent inclusion indicators corresponding to the graph and structural variables. These posterior draws are then used to compute probabilities for both local and global association metrics. The posterior probability that the local association metric $A_{L,k}$ takes a specific value $s$ is estimated by the Monte Carlo proportion of samples for which the realized value equals $s$:
\begin{equation*}
P(A_{L,k}=s \mid \text{data}) = \frac{1}{F} \sum_{f=1}^F I(A_{L,k}(\boldsymbol{\gamma}^{(f)}, \boldsymbol{\psi}_k^{(f)}) = s),    
\end{equation*}
where $I(\cdot)$ denotes the indicator function. This provides a direct probabilistic statement about the strength of association between the graph and the $k$-th structural variable. Similarly, the global association metric $A_{G}$ is evaluated at each posterior draw. The posterior probability of $A_G$ taking a particular value $s$ is given by
\begin{equation*}
P(A_{G}=s \mid \text{data}) = \frac{1}{F} \sum_{f=1}^F I(A_{G}(\boldsymbol{\gamma}^{(f)}, \boldsymbol{\psi}^{(f)}) = s).    
\end{equation*}
This yields an interpretable measure of evidence for global structural–functional association across all modalities.

While the main emphasis of the model is on testing associations 
between structural and functional modalities, the framework also 
supports predictive inference. We note that because the model 
includes subject-specific parameters ($\mu_i$, $\theta_{i,k}$, 
$\boldsymbol{\eta}_{i,v}$, $\boldsymbol{\alpha}_{i,k}$), prediction 
for new (out-of-sample) subjects is not directly available from 
the fitted model. We therefore evaluate predictive performance 
via a node-holdout scheme, in which structural measurements 
$y_{i,v^*,k}$ at a subset of ROIs are masked prior to model 
fitting. For each posterior draw $f=1,\dots,F$, predictive samples 
are generated as 
$N(\theta_{i,k}^{(f)} + \mathbf{x}_i^T 
\boldsymbol{\beta}_{x,S}^{(f)} + 
g(\boldsymbol{\eta}_{i,v^*}^{(f)}, 
\boldsymbol{\alpha}_{i,k}^{(f)}, 
\boldsymbol{\omega}_k^{(f)}), \phi_k^{(f)2})$, 
where the posterior distribution of 
$\boldsymbol{\eta}_{i,v^*}$ is informed by the observed 
functional connectivity edges $a_{i,v^*,v'}$ rather than by 
the masked structural data. Details of this evaluation are 
presented in Section~\ref{sec-simulation}.

\section{Simulation Studies}\label{sec-simulation}

To assess the performance of our proposed Bayesian hierarchical model for the joint analysis of brain connectivity graphs and structural data, we carried out a comprehensive simulation study under a range of settings. The primary goal was to evaluate the model’s ability to accurately estimate both the local association metrics $\{A_{L,k}:k=1,...,P\}$ and the global association metric $A_{G}$, which together quantify the dependence between graph connectivity and structural measurements. In addition, we examined the model’s predictive accuracy in terms of offering point prediction and predictive uncertainty for graph edges and structural data, and compare the performance with marginal models that analyze graph and structural data separately.

\subsection{Data Simulation and Evaluation Criteria}
In each simulation scenario, we generated synthetic datasets according to the graph and structural models in Eq.~\eqref{eq:network_model}, \eqref{eq:structural_model}, and \eqref{eq:bil}, with $N = 200$ subjects and $V = 25$ brain regions. Latent factors $\bm{\eta}_{i,v}$, structural variable specific factors $\bm{\alpha}_{i,k}$, structural variable specific intercept $\theta_{i,k}$, and subject-specific intercept for the graph model $\mu_i$ were sampled from their respective prior distributions, setting the dimension of both $\bm{\eta}_{i,v}$ and $\bm{\alpha}_{i,k}$ to $R^*=5$. To systematically evaluate model performance, we varied several key parameters:
\begin{itemize}
\item \textbf{Number of structural variables:} $P\in\{3,5\}$.
\item \textbf{True global association:} $A_{G}^*\in\{0,2,4\}$, where $0$ indicates no global association between structural and graph data, with stronger dependence at larger values.
\item \textbf{Signal-to-noise ratio (SNR):} To examine robustness under different data quality conditions, we varied the SNR for both network and structural components, defined as
\[
\text{SNR}_{\text{graph}} = \frac{\text{Var}(\boldsymbol{\eta}_{i,v}^\top \boldsymbol{\Delta} \boldsymbol{\eta}_{i,v'})}{\sigma^2}, \quad
\text{SNR}_{\text{structural},k} = \frac{\text{Var}(\boldsymbol{\eta}_{i,v}^\top \boldsymbol{\Omega}_k \boldsymbol{\alpha}_{i,k})}{\phi_k^2},
\]
where $\boldsymbol{\Delta} = \mathrm{diag}(\lambda_1, \dots, \lambda_R)$ and $\boldsymbol{\Omega}_k = \mathrm{diag}(\omega_{k,1}, \dots, \omega_{k,R})$.
Datasets were generated under three regimes: High (SNR 4–8), Moderate (SNR 1–2), and Very Low (SNR 0.2–0.5) by scaling the noise variances while holding the signal magnitudes fixed.
\end{itemize}

This simulation framework encompasses six distinct settings, each examined under high, moderate, and very low signal-to-noise ratio (SNR) conditions. In order to mirror the signal characteristics typically observed in neuroimaging studies, we emphasize moderate SNR values, which represent the most common and informative regime, while avoiding scenarios with unrealistically high SNR. Very low SNR settings are included primarily to identify the point at which model performance deteriorates, though we do not expect the model to capture meaningful associations in these cases.

The simulations vary systematically in the number of structural variables, the extent of cross-modality associations, and the level of SNR. This comprehensive design allows us to assess the joint model’s ability to recover local and global association metrics across a spectrum of realistic data-generating conditions. For each simulation scenario, the joint model is fit using latent dimensionalities $R=5$ and $R=10$, enabling evaluation of how the choice of latent space size influences model performance. Combining six foundational data settings with two choices of $R$ results in a total of $12$ simulation scenarios, as detailed in Table~\ref{tab:simsettings}. To further ensure the robustness and reproducibility of our results, each scenario is replicated $20$ times.

\begin{table}[htbp]
\centering
\caption{Summary of the 12 simulation scenarios. Each scenario is evaluated under three SNR regimes (high, moderate, and very-low).}
\label{tab:simsettings}
\begin{tabular}{@{}ccccc | ccccc@{}}
\toprule
\multicolumn{5}{c|}{\textbf{Scenarios 1--6} ($P=3$)} & \multicolumn{5}{c}{\textbf{Scenarios 7--12} ($P=5$)} \\
\midrule
\textbf{Sim} & $\boldsymbol{P}$ & $\boldsymbol{R_{\text{fit}}}$ & $\boldsymbol{A_G^*}$ & $\boldsymbol{A_{L,k}^*}$ & \textbf{Sim} & $\boldsymbol{P}$ & $\boldsymbol{R_{\text{fit}}}$ & $\boldsymbol{A_G^*}$ & $\boldsymbol{A_{L,k}^*}$ \\
\midrule
1 & 3 & 5  & 0 & (0, 0, 0) & 7  & 5 & 5  & 0 & (0, 0, 0, 0, 0) \\
2 & 3 & 5  & 2 & (2, 2, 2) & 8  & 5 & 5  & 2 & (2, 2, 2, 2, 2) \\
3 & 3 & 5  & 4 & (4, 4, 4) & 9  & 5 & 5  & 4 & (4, 4, 4, 4, 4) \\
4 & 3 & 10 & 0 & (0, 0, 0) & 10 & 5 & 10 & 0 & (0, 0, 0, 0, 0) \\
5 & 3 & 10 & 2 & (2, 2, 2) & 11 & 5 & 10 & 2 & (2, 2, 2, 2, 2) \\
6 & 3 & 10 & 4 & (4, 4, 4) & 12 & 5 & 10 & 4 & (4, 4, 4, 4, 4) \\
\bottomrule
\end{tabular}
\end{table}

We estimated the posterior distribution of the global association metric $A_G$ and visualized these results using dotplots. Due to the large number of local association metrics $A_{L,k}$ for $k=1,...,P$, it is impractical to plot each individually. To succinctly summarize the posterior distributions of $A_{L,k}$, we assess estimation accuracy using the absolute deviation (AD)
\begin{align}
AD(A_{L,k})=|E(A_{L,k})-A_{L,k}^*|,\:\:\mbox{where}\:\:E(A_{L,k})=\sum_{r=0}^R r*P(A_{L,k}=r)    
\end{align}
Here, $E(A_{L,k})$  denotes the posterior mean of $A_{L,k}$, and $A_{L,k}^*$
is the true value under the data-generating process. This metric reflects the difference between the estimated posterior mean and the actual local association for each $k=1,...,P$. By jointly evaluating posterior probabilities of $A_G$ and absolute deviations $AD(A_{L,k})$, we systematically characterize the posterior distributions of both global and local association metrics, providing a rigorous assessment of the model's ability to recover true underlying associations.

To further evaluate the advantage of jointly modeling graph and structural data, we compared our proposed approach with a competing baseline strategy in which the structural modality is analyzed independently. Specifically, we compared the joint model against a separate structural-only model, which is estimated in isolation without sharing latent factors with the graph data. Using the node-holdout scheme described in Section~\ref{sec-simulation}, we then compared the mean squared predictive error (MSPE) for the masked structural nodes, and quantified imputation uncertainty through the coverage and length of 95\% predictive intervals. This comparison highlights how exploiting shared structure across modalities in the joint model can substantially improve node-level imputation accuracy. Because the joint model allows the observed edges in the graph modality to inform the missing structural measurements, it yields superior recovery of the masked data compared to the structural-only model, particularly when the degree of association between the two modalities is strong.

\subsection{Study of Global and Local Association Metrics}

Figure~\ref{fig:your_label_high} presents the estimated posterior probabilities $P(A_G=r)$ for each possible value $r=1,..,R$ across various simulation scenarios in the high SNR setting. The results reveal a consistently clear pattern: the posterior probability corresponding to the true value, $P(A_G=A_G^*)$, is overwhelmingly dominant and approaches $1$ in every scenario. This reflects both the model’s high confidence and precise accuracy in detecting the true global association strength when the signal is strong. Notably, $P(A_G=A_G^*)$ serves as a direct measure of uncertainty, indicating that the model reliably and decisively identifies the true value with minimal ambiguity in its inference.
Furthermore, the model’s performance under high SNR demonstrates robustness; it remains unaffected by the total number of structural variables included in the data, as well as by the choice of latent dimensionality ($R=5$ or $R=10$). This resilience suggests that the method is stable and reliable across a broad range of realistic neuroimaging conditions.

\begin{figure}[H]
    \centering
    \includegraphics[width=\textwidth]{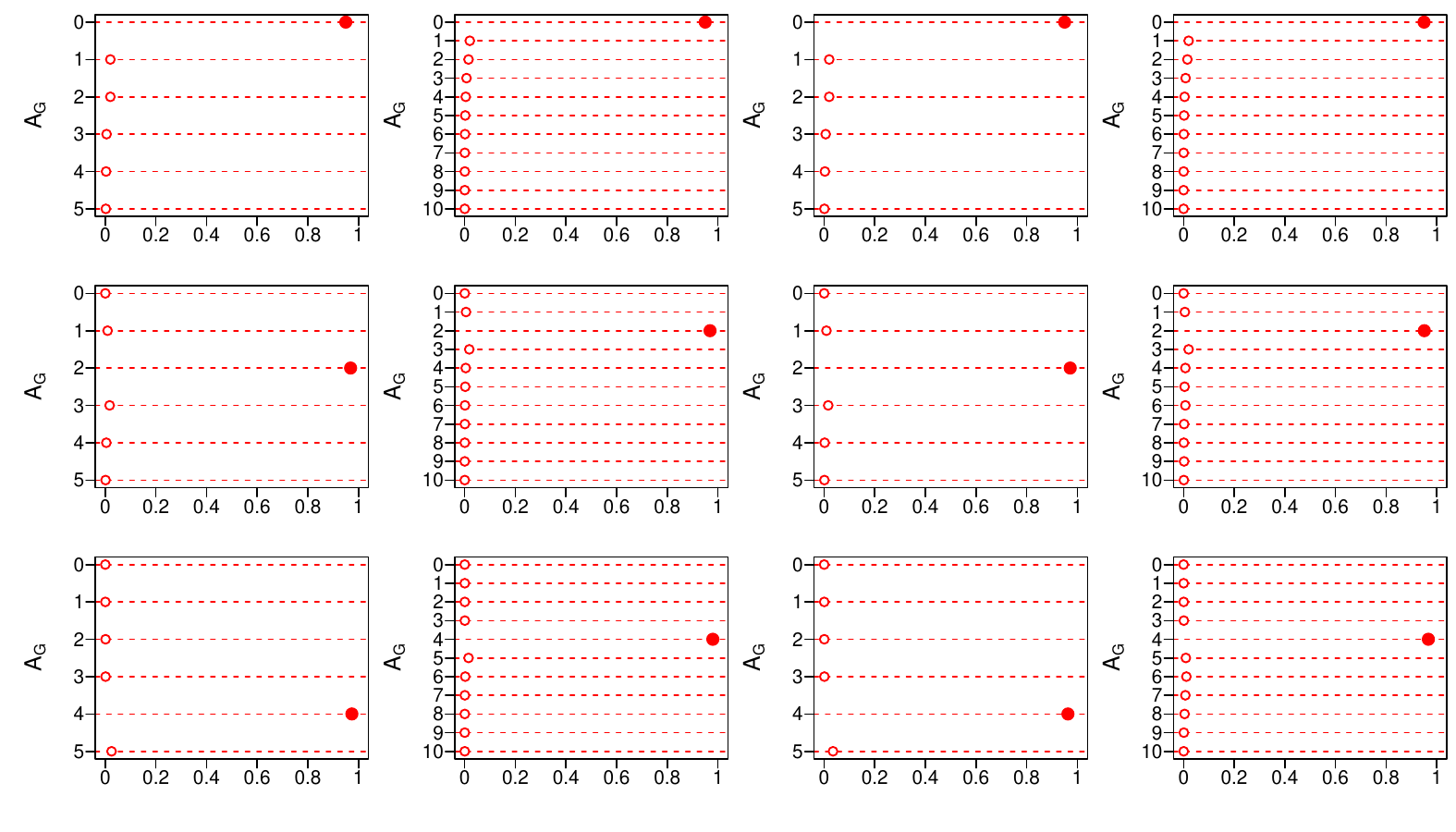}
    \caption{Dotplots present the posterior distributions of $A_G$ across the twelve simulation scenarios under high SNR conditions. On the y-axis, the red point indicates the true value of $A_G$. The layout assigns the first column to scenarios 1–3, the second to 4–6, the third to 7–9, and the fourth to 10–12. Across all scenarios, the posterior probability that $A_G$ matches its true value is close to 1, underscoring the robust accuracy of the proposed method.}
    \label{fig:your_label_high}
\end{figure}

A similar pattern emerges under moderate SNR, as illustrated in Figure~\ref{fig:your_label_mid}. Here too, $P(A_G=A_G^*)$ is maximized, and model performance is unaffected by either the complexity of the data or the latent space size. These findings are particularly relevant given that high and moderate SNR settings closely match the signal quality encountered in typical neuroimaging studies.

\begin{figure}[H]
    \centering
    \includegraphics[width=\textwidth]
    {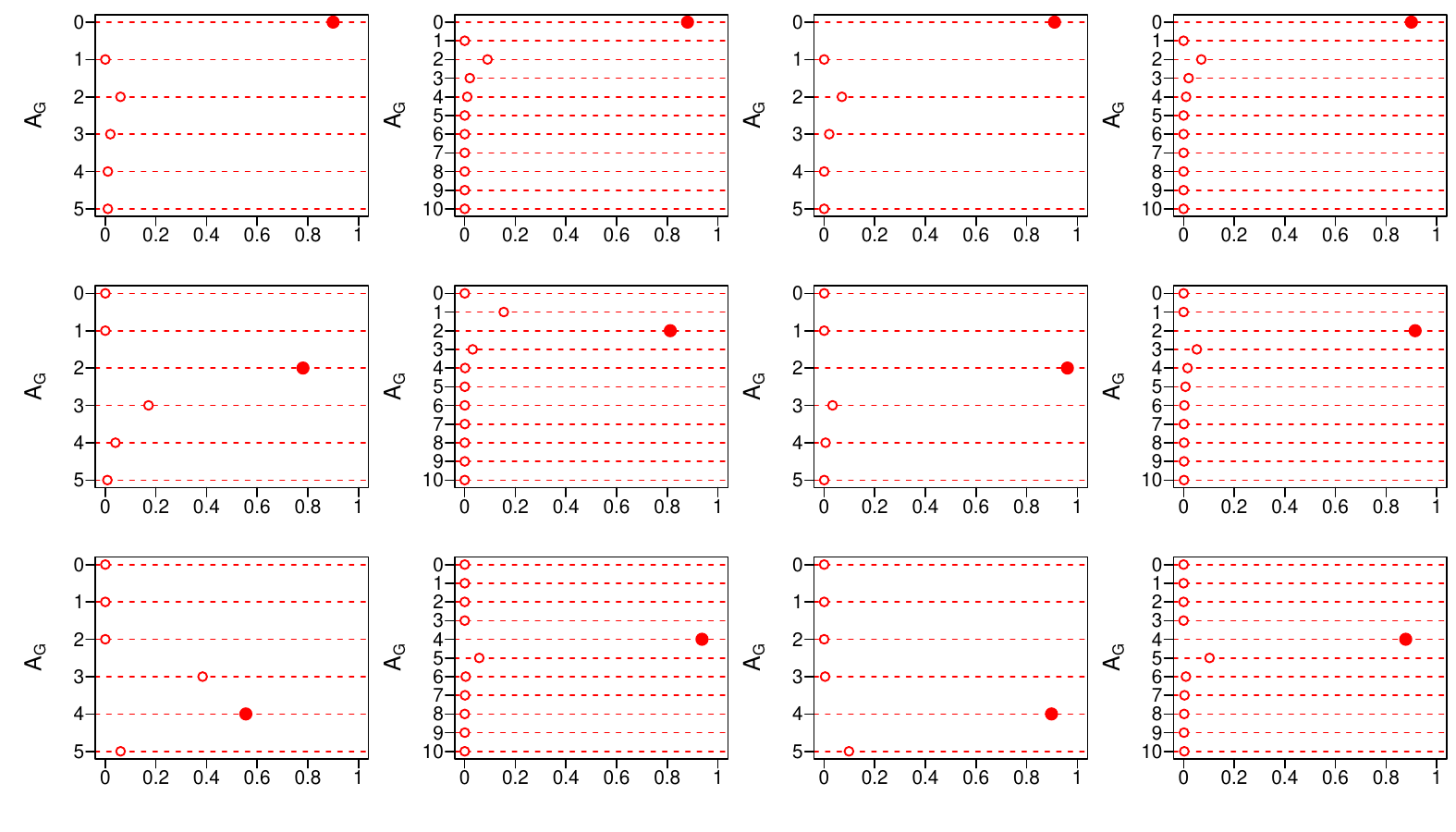}
    \caption{Dotplots present the posterior distributions of $A_G$ across the twelve simulation scenarios under moderate SNR conditions. On the y-axis, the red point indicates the true value of $A_G$. The layout assigns the first column to scenarios 1–3, the second to 4–6, the third to 7–9, and the fourth to 10–12. Across all scenarios, the posterior probability that $A_G$ matches its true value is close to 1, underscoring the robust accuracy of the proposed method.}
    \label{fig:your_label_mid}
\end{figure}

As the signal-to-noise ratio decreases, particularly in the very low SNR setting, the model performs well by correctly signaling the absence of a global association between the connectivity graph and structural variables, as evidenced in the first row of Figure~\ref{fig:your_label_low}. However, when the true strength of association increases (i.e., $A_G^*=2$ or $4$), the model’s ability to recover the correct association deteriorates markedly. In these scenarios, the posterior probability assigned to incorrect values, $P(A_G\neq A_G^*)$, exceeds that assigned to the true value, highlighting a substantial decline in inference accuracy caused by the limited signal.

\begin{figure}[H]
    \centering
    \includegraphics[width=\textwidth]{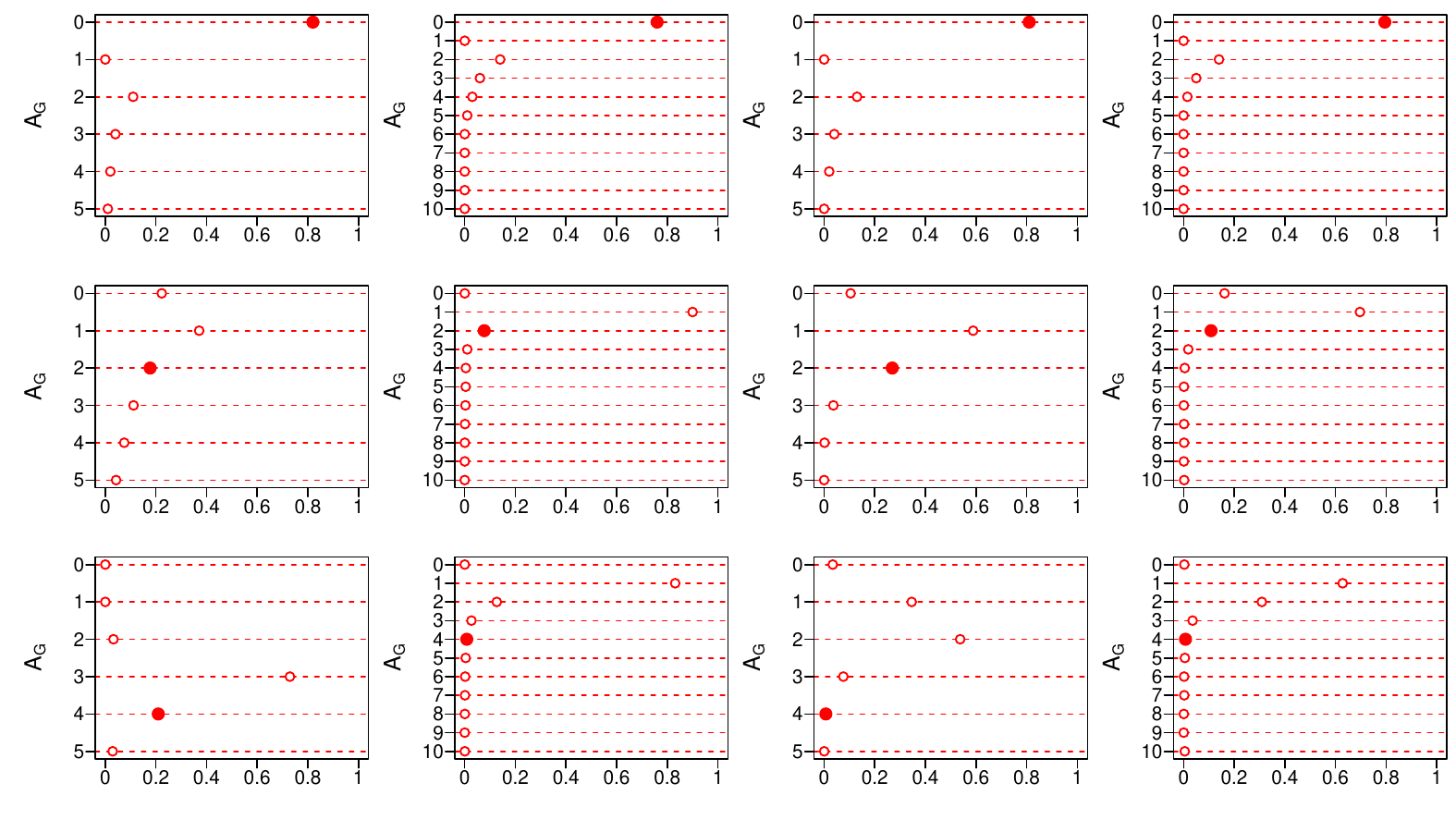}
    \caption{Dotplots display the posterior distributions of $A_G$ for each of the twelve simulation scenarios under very low SNR conditions. The red point on the y-axis marks the true value of $A_G$. The arrangement places scenarios 1–3 in the first column, 4–6 in the second, 7–9 in the third, and 10–12 in the fourth. In these low-signal settings, the method has difficulty assigning the highest posterior probability to the true value of $A_G$  when $A_G^*$ is nonzero, that is, as the true strength of association increases, the model’s recovery performance diminishes.}
    \label{fig:your_label_low}
\end{figure}

Moreover, this decline in performance is further exacerbated as the latent dimensionality parameter $R$ increases, despite the true value being $R^*=5$. Under very low SNR, the model struggles even more with higher latent space dimensions, suggesting that excessive model complexity amplifies estimation uncertainty and reduces reliability in the presence of weak signal. These findings emphasize the challenges inherent in association recovery under adverse signal conditions and underscore the importance of aligning model complexity with available data quality.

\begingroup
\setlength{\tabcolsep}{3pt} 
\footnotesize 
\begin{longtable}{@{}lcccccc@{}}
\caption{Absolute deviation of local associations AD($A_{LK}$) across Low, Medium, and High SNR levels. Each cell is $P$-dimensional showing $AD(A_{L,k})$ values for $k=1,..,P$. Smaller values indicate better recovery of the true local association. The model shows recovery with high accuracy for most simulation scenarios. }
\label{tab:local_association_results} \\
\toprule
Sim & $P$ & $R_{\text{fitted}}$ & $A_{L,k}^*$ & Low SNR: AD($A_{L,k}$) & Med SNR: AD($A_{L,k}$) & High SNR: AD($A_{L,k}$) \\
\midrule
\endfirsthead
\caption[]{Absolute deviation of local associations (continued)} \\
\toprule
Sim & $P$ & $R_{\text{fitted}}$ & $A_{L,k}^*$ & Low & Med & High \\
\midrule
\endhead
\bottomrule
\endlastfoot
1  & 3 & 5  & (0,0,0) & (0.81, 0.32, 0.21) & (0.28, 0.11, 0.12) & (0.01, 0.02, 0.03) \\
2  & 3 & 5  & (2,2,2) & (1.91, 1.85, 0.58) & (0.31, 0.23, 0.83) & (0.02, 0.21, 0.13) \\
3  & 3 & 5  & (4,4,4) & (3.90, 3.95, 3.96) & (0.91, 0.35, 0.91) & (0.10, 0.20, 0.19) \\
4  & 3 & 10 & (0,0,0) & (0.71, 0.67, 0.91) & (0.21, 0.12, 0.19) & (0.01, 0.02, 0.01) \\
5  & 3 & 10 & (2,2,2) & (1.97, 1.97, 1.97) & (1.12, 0.55, 0.97) & (0.12, 0.11, 0.20) \\
6  & 3 & 10 & (4,4,4) & (3.00, 2.99, 3.90) & (2.75, 1.20, 1.33) & (0.09, 0.06, 0.02) \\
7  & 5 & 5  & (0,0,0,0,0) & (1.12, 2.13, 1.11, 0.99, 1.31) & (0.91, 0.88, 1.23, 0.78, 0.56) & (0.02, 0.09, 0.11, 0.08, 0.02) \\
8  & 5 & 5  & (2,2,2,2,2) & (1.63, 2.86, 2.78, 2.11, 2.12) & (1.01, 1.01, 0.12, 0.11, 0.98) & (0.01, 0.02, 0.01, 0.01, 0.01) \\
9  & 5 & 5  & (4,4,4,4,4) & (3.20, 2.10, 3.89, 0.82, 0.92) & (1.15, 0.36, 0.71, 0.91, 0.21) & (0.01, 0.21, 0.01, 0.09, 0.12) \\
10 & 5 & 10 & (0,0,0,0,0) & (1.13, 2.22, 1.29, 1.12, 2.21) & (0.82, 0.15, 0.25, 0.13, 0.11) & (0.01, 0.01, 0.09, 0.08, 0.01) \\
11 & 5 & 10 & (2,2,2,2,2) & (1.99, 1.95, 1.98, 1.78, 1.89) & (1.01, 1.01, 1.57, 0.89, 0.93) & (0.01, 0.06, 0.12, 0.31, 0.01) \\
12 & 5 & 10 & (4,4,4,4,4) & (3.40, 2.76, 3.90, 2.19, 3.11) & (0.89, 0.21, 1.10, 0.98, 0.99) & (0.01, 0.12, 0.09, 0.98, 0.21) \\
\end{longtable}
\endgroup

Table~\ref{tab:local_association_results} presents the accuracy of estimating the local association metric $A_{L,k}$ for each $k=1,...,P$. In high SNR scenarios, the accuracy metric remains nearly zero, indicating highly precise recovery of the true local association values $A_{L,k}^*$. As the SNR decreases, estimation accuracy declines. Moderate SNR regimes still yield reasonable identification of the true local associations, while very low SNR conditions result in noticeably elevated accuracy scores, reflecting increased estimation error.

Under very low SNR, this deterioration becomes increasingly pronounced as the true local association strength grows (i.e., with larger $A_{L,k}^*$). Furthermore, estimation challenges in these low-signal settings are exacerbated when the latent dimensionality is misspecified (e.g., when $R=10$ while the ground truth is $R^*=5$), leading to higher errors in estimating $A_{L,k}^*$. It is important to note that such latent space misspecification has a negligible effect in moderate and high SNR scenarios, where the model’s recovery of local associations remains robust.


\subsection{Predictive Inference via Node-Holdout Imputation}

As described in Section~\ref{sec-simulation}, we evaluate predictive performance by masking structural measurements $y_{i,v^*,k}$ at a random subset of nodes for each subject and comparing the joint model against a structural-only baseline. Under the joint model, the shared latent factor $\eta_{i,v^*}$ is informed by the observed functional connectivity edges $a_{i,v^*,v'}$, whereas the separate model must rely entirely on the prior.

If the modalities are globally independent (i.e., $A_G^*=0$), both the joint and separate models should yield comparable mean squared prediction errors (MSPEs). However, when an association is present ($A_G^* > 0$), the joint model should achieve better predictive performance, along with improved characterization of predictive uncertainty. Accordingly, we exclude simulation settings with $A_G^*=0$ (namely, scenarios 1, 4, 7 and 10), and focus our comparison on scenarios where $A_G^*>0$, specifically, scenarios 2, 3, 5, 6, 8, 9, 11 and 12, across high, moderate, and very-low SNR levels.

\begin{figure}[htbp]
    \centering
    \begin{subfigure}[b]{0.32\textwidth}
        \includegraphics[width=\linewidth]{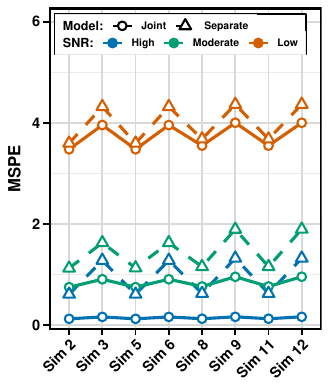}
        \caption{MSPE}
    \end{subfigure}
    \hfill
    \begin{subfigure}[b]{0.32\textwidth}
        \includegraphics[width=\linewidth]{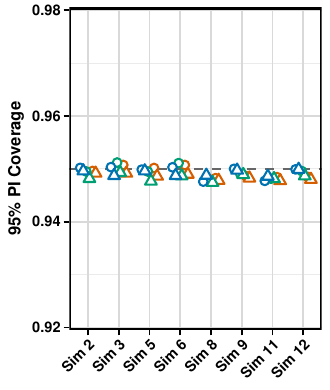}
        \caption{95 \% PI Coverage}
    \end{subfigure}
    \hfill
    \begin{subfigure}[b]{0.32\textwidth}
        \includegraphics[width=\linewidth]{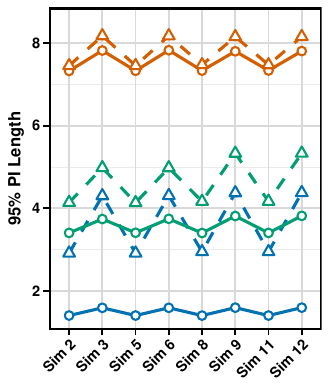}
        \caption{95 \% PI Length}
    \end{subfigure}
    \caption{Structural node-holdout imputation: (a) MSPE, (b) 95\% PI Coverage, and (c) 95\% PI Length for Joint and Separate models across associated scenarios ($A_G^* > 0$), evaluated under high (blue), moderate (green), and low (orange) SNR levels. Results are averaged over 20 replications.}
    \label{fig:prediction_trends}
\end{figure}

Figure~\ref{fig:prediction_trends} illustrates the average predictive performance for structural node-holdout imputation across varying SNR regimes. The joint model consistently outperforms the separate structural-only model, achieving a lower MSPE and significantly narrower predictive intervals across all simulated scenarios. This performance gap is a direct consequence of the joint model's ability to leverage the observed functional connectivity edges ($a_{i,v^*,v'}$) to inform the posterior distribution of the shared latent factor $\eta_{i,v^*}$ for the held-out node. In contrast, the separate model, lacking access to the network modality, is forced to rely on the hierarchical prior distribution for the held-out latent factors, $\boldsymbol{\eta}_{i,v^*} \sim N(\mathbf{0}, \sigma_\eta^2 \mathbf{I})$. Furthermore, as shown in Figure~\ref{fig:prediction_trends}b, the joint model maintains nominal 95\% coverage even while providing significantly greater predictive precision. This indicates that the reduction in interval length (Figure~\ref{fig:prediction_trends}c) is not achieved through overconfidence, but rather through a genuine gain in information afforded by the cross-modality borrowing of strength. While the predictive advantage is most pronounced in high and moderate SNR conditions, the joint model remains superior even in low SNR settings, where the network topology provides a critical stabilizing signal that is otherwise absent in the structural-only approach.

\begin{figure}[htbp]
    \centering
    \begin{subfigure}[b]{0.32\textwidth}
        \includegraphics[width=\linewidth]{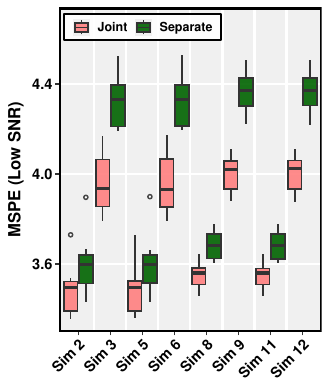}
        \caption{Low SNR}
    \end{subfigure}
    \hfill
    \begin{subfigure}[b]{0.32\textwidth}
        \includegraphics[width=\linewidth]{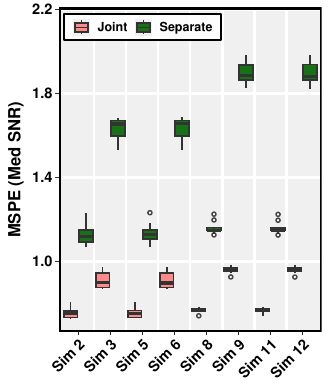}
        \caption{Med SNR}
    \end{subfigure}
    \hfill
    \begin{subfigure}[b]{0.32\textwidth}
        \includegraphics[width=\linewidth]{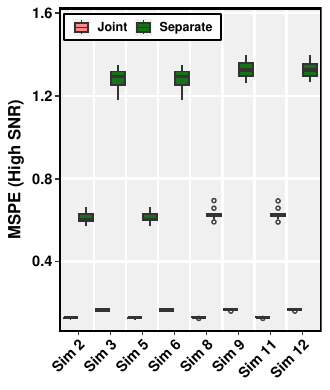}
        \caption{High SNR}
    \end{subfigure}
    \caption{Distribution of MSPE for structural node-holdout imputation across 20 replications under (a) low, (b) medium, and (c) high SNR. Boxplots compare the Joint model (pink) and Separate model (green) across associated scenarios ($A_G^* > 0$).}
    \label{fig:mspe_boxplots}
\end{figure}

The distributional characteristics of the prediction error and interval efficiency are further detailed in Figures~\ref{fig:mspe_boxplots} and~\ref{fig:length_boxplots}. As demonstrated by the MSPE boxplots in Figure~\ref{fig:mspe_boxplots}, the joint model consistently yields a lower median error and significantly reduced dispersion compared to the separate model. This reduction in variance across replications suggests that the joint framework provides a more robust and stable imputation mechanism; whereas the separate model's predictions are highly sensitive to the sparsity of the structural data, the joint model regularizes the imputation of missing nodes through the subject-specific connectivity structure. This gain in predictive efficiency is most evident in Figure~\ref{fig:length_boxplots}, where the 95\% PI lengths for the joint model are substantially tighter than those of the separate model across all SNR levels. In the high SNR regime, the separate model exhibits nearly double the interval width of the joint model in several scenarios, reflecting a high degree of predictive waste caused by ignoring the network-informed latent structure. Even as the signal-to-noise ratio decreases, the joint model maintains its advantage in uncertainty quantification. These results collectively demonstrate that by integrating multimodal information into a shared latent space, the joint model achieves a more concentrated posterior for missing structural values, resulting in imputations that are both more accurate and more precise than those produced by isolated modeling.

\begin{figure}[H]
    \centering
    \begin{subfigure}[b]{0.32\textwidth}
        \includegraphics[width=\linewidth]{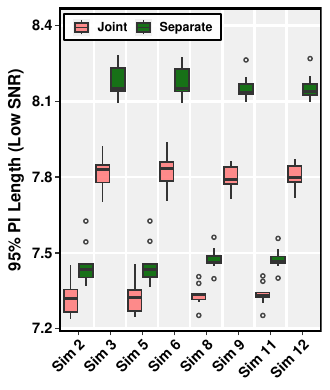}
        \caption{Low SNR}
    \end{subfigure}
    \hfill
    \begin{subfigure}[b]{0.32\textwidth}
        \includegraphics[width=\linewidth]{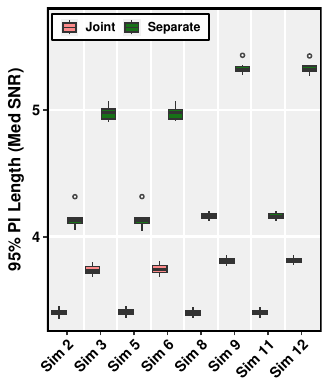}
        \caption{Med SNR}
    \end{subfigure}
    \hfill
    \begin{subfigure}[b]{0.32\textwidth}
        \includegraphics[width=\linewidth]{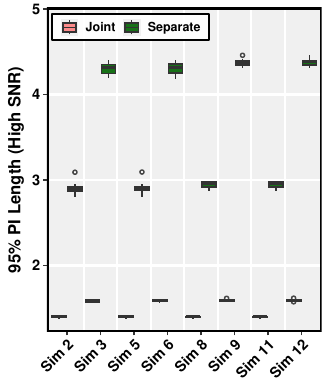}
        \caption{High SNR}
    \end{subfigure}
    \caption{Distribution of 95\% PI lengths for structural node-holdout imputation across 20 replications under (a) low, (b) medium, and (c) high SNR. Joint model (pink) and Separate model (green) are compared across associated scenarios ($A_G^* > 0$).}
    \label{fig:length_boxplots}
\end{figure}


\section{Application to Multi-Modal Brain Imaging Data}\label{sec-application}

We apply the proposed joint modeling framework to brain structural and functional data from
$N = 126$ participants in order to assess the strength of association between the two imaging modalities, after accounting for subject-level covariates. The functional component consists of network-valued data derived from BOLD fMRI time series across 69 ROIs, leading to an undirected graph with $69$ nodes, represented by a $69\times 69$ symmetric adjacency matrix. The structural
component comprises brain volume of each ROI extracted from T1- and T2-weighted scans. For each subject, we also include the APS Score as a behavioral predictor.

We fit the joint model under several choices of the latent dimension, $R \in \{5,10, 15\}$, and compare its performance to that of a corresponding ``solo'' structural model that treats the modality in isolation. For each specification, we summarize posterior inference via (i) the posterior probability of the global association metric being nonzero, $P(A_G>0|\mbox{Data})$, and (ii) predictive performance for the structural data using the node-holdout imputation framework. Predictive performance is evaluated using the mean squared prediction error (MSPE), 95\% predictive interval (PI) coverage, and the average length of the 95\% PIs. Notably, since we have only one structural measure ($P=1$), the local association metric $A_{L,1}$ coincides with the 
global association metric $A_G$, and we therefore report only $A_G$.

Dotplots in Figure~\ref{fig:AG_dotplots} summarize the posterior distribution of the global association
metric \(A_G\) for different choices of the latent dimension \(R\). For each value of \(R\), we obtain
\(P(A_G>0) \approx 1\), indicating overwhelming evidence for a nontrivial association between the
structural and functional modalities. Moreover, across all values of \(R\), the posterior mass is
concentrated at \(A_G = 3\), so that \(P(A_G = 3 \mid \text{Data})\) is the largest among the
possible association levels.

We also observe that \(P(A_G = 3 \mid \text{Data})\) increases with the latent dimension, yielding
the strongest posterior support for \(A_G = 3\) when \(R = 15\). This behavior is expected:
a larger \(R\) allows the model to capture more complex shared variation between the structural and functional data by accommodating a richer low-rank dependence structure. As \(R\) increases, the
model can more flexibly represent cross-modal covariance patterns, leading to a clearer separation between weaker and stronger association regimes. Consequently, the posterior more confidently assigns probability to the highest association level (\(A_G = 3\)) in higher-dimensional latent spaces,
reflecting an enhanced ability to detect and characterize the underlying structure--function coupling. The result stabilizes when $R$ increases above $15$, and is not shown.

\begin{figure}[htbp]
    \centering
    \subfloat[$R = 5$]{%
        \includegraphics[width=0.31\textwidth]{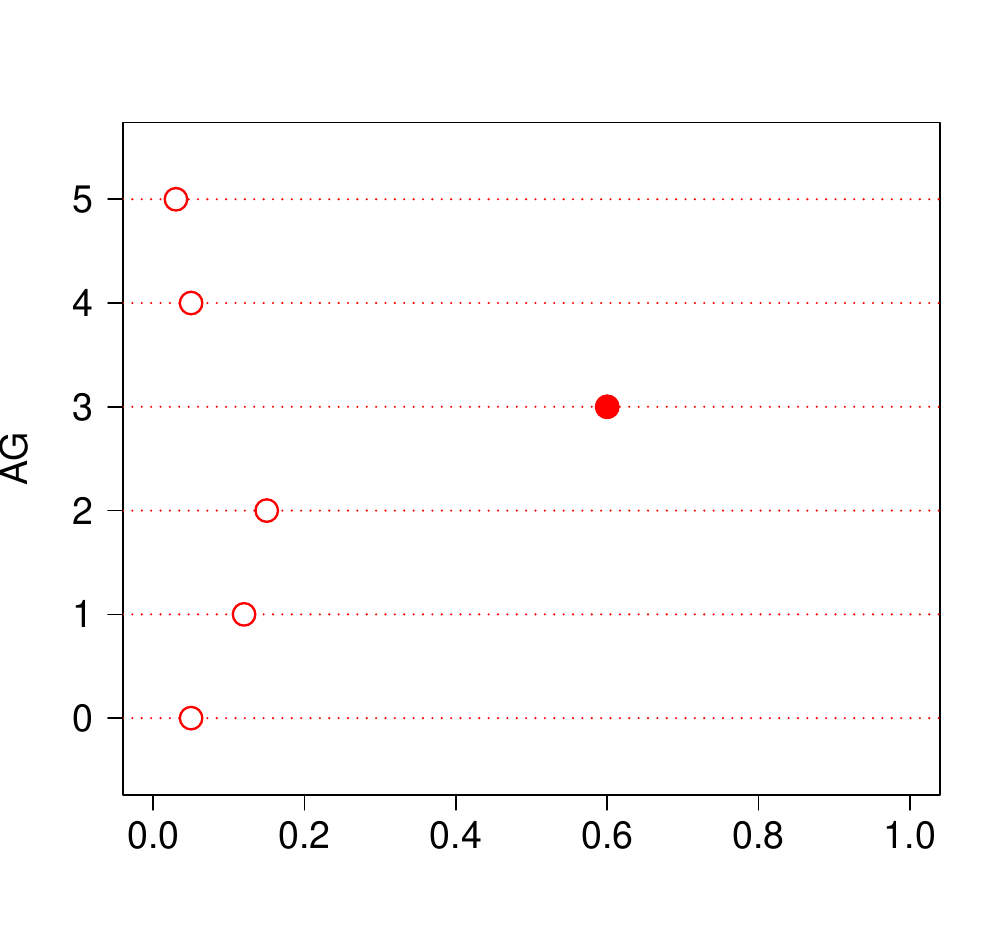}%
    }\hfill
    \subfloat[$R = 10$]{%
        \includegraphics[width=0.31\textwidth]{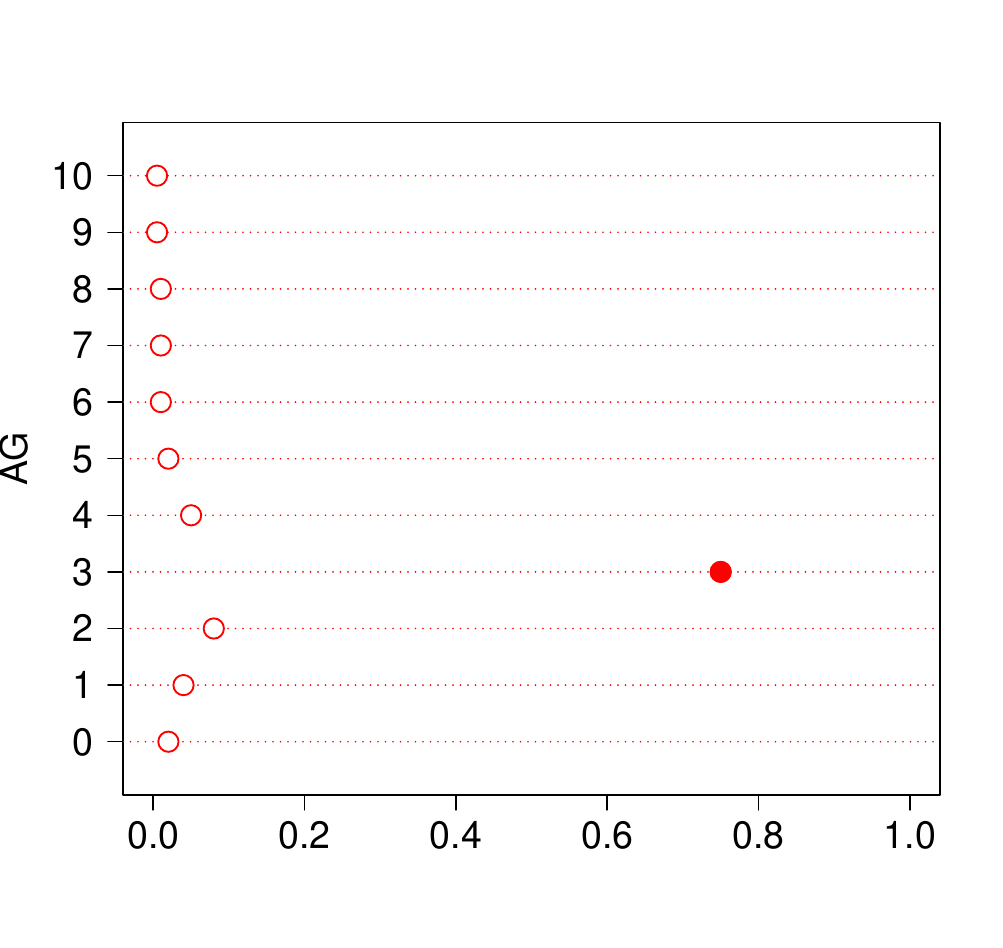}%
    }\hfill
    \subfloat[$R = 15$]{%
        \includegraphics[width=0.31\textwidth]{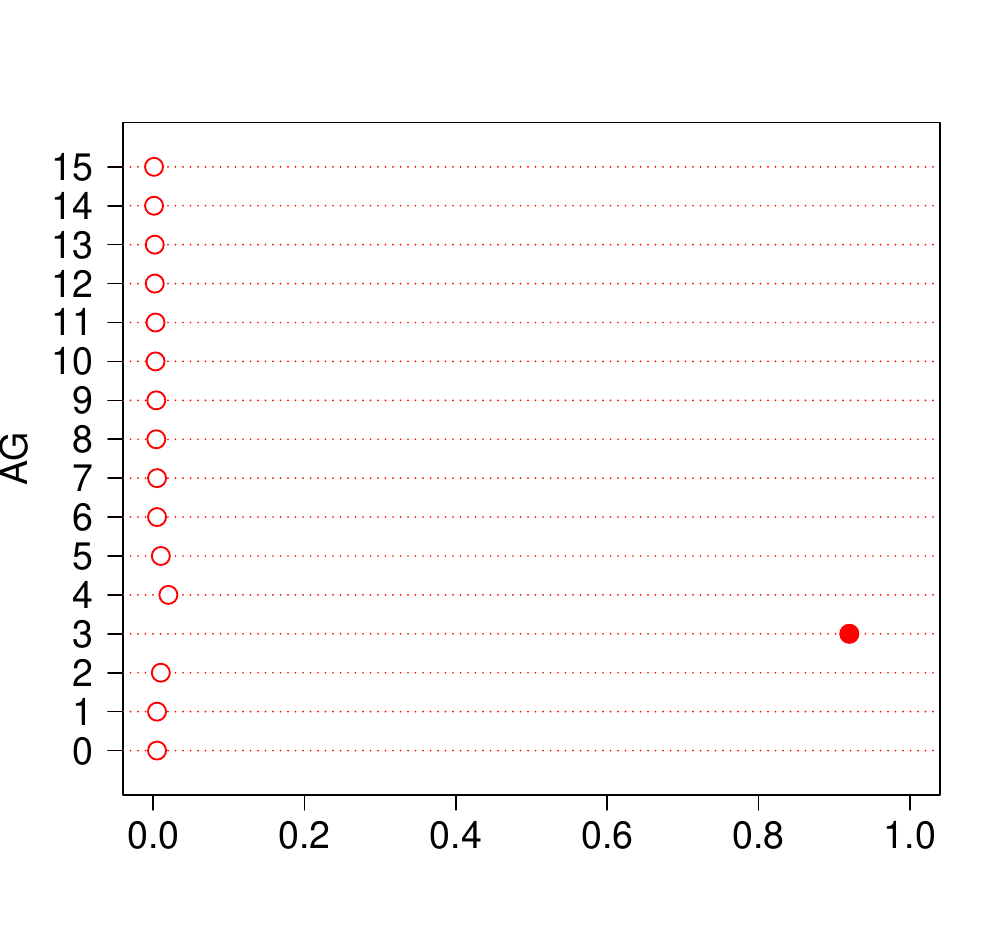}%
    }
    \caption{Dotplots of the global association indicator $A_G$ across latent dimensions $R$. The result shows $P(A_G=3)$ is high consistently for different values of $R$ indicating reliable inference in identifying strong structural-functional data association.}
    \label{fig:AG_dotplots}
\end{figure}

Table \ref{tab:real_results} presents the node-holdout predictive performance on real neuroimaging data across varying latent dimensions. The Joint model uniformly outperforms the Separate approach across all evaluated ranks, demonstrating the clear advantage of leveraging shared cross-modal information for network imputation. While the Separate model's predictive error and uncertainty metrics remain stagnant regardless of the latent dimension, the Joint framework exhibits sensitivity to the chosen rank. It achieves its optimal point-prediction accuracy at the lowest evaluated rank, with performance slightly stabilizing at higher dimensions. Crucially, irrespective of the latent dimension, the Joint model provides superior uncertainty quantification, consistently yielding narrower predictive intervals while maintaining empirical coverage closer to the nominal target.

\begin{longtable}[]{@{}lcccccc@{}}
\caption{Node-holdout predictive performance on real neuroimaging data across latent dimensions $R \in \{5, 10, 15\}$. Bold values indicate the superior model.}\label{tab:real_results}\tabularnewline
\toprule\noalign{}
 & \multicolumn{2}{c}{$R = 5$} & \multicolumn{2}{c}{$R = 10$} & \multicolumn{2}{c}{$R = 15$} \\
Metric & Joint & Separate & Joint & Separate & Joint & Separate \\
\midrule\noalign{}
\endfirsthead
\toprule\noalign{}
 & \multicolumn{2}{c}{$R = 5$} & \multicolumn{2}{c}{$R = 10$} & \multicolumn{2}{c}{$R = 15$} \\
Metric & Joint & Separate & Joint & Separate & Joint & Separate \\
\midrule\noalign{}
\endhead
\bottomrule\noalign{}
\endlastfoot
MSPE     & \textbf{0.512} & 0.628 & \textbf{0.567} & 0.628 & \textbf{0.568} & 0.629 \\
Length   & \textbf{2.835} & 2.997 & \textbf{2.880} & 2.998 & \textbf{2.847} & 2.998 \\
Coverage & \textbf{0.948} & 0.941 & \textbf{0.949} & 0.939 & \textbf{0.945} & 0.941 \\
\end{longtable}


\section{Conclusion and Future Work}\label{sec-conclusion}
In this paper, we proposed a Bayesian joint modeling framework for quantifying the association between functional neuroimaging data, expressed in the form of a brain network over ROIs, and multivariate structural measures over ROIs, with an application to multimodal neuroimaging data. Our approach is designed to test for structure--function association, while flexibly borrowing information across modalities through a shared low-rank latent representation. The model accommodates subject-level behavioral covariates, such as the Aggregate Pegboard score, and yields interpretable probability-of-association summaries that directly quantify evidence for structure–function coupling. Empirical results from the neuroimaging application indicate overwhelming evidence for a nontrivial association between structural and functional modalities, which appears to be robust to the choice of model tuning parameters (e.g., dimensions of the latent factors), reflecting the model's enhanced ability to capture complex shared variation in richer latent spaces. Comparisons between joint modeling of structural and functional data, and solo models further demonstrate that joint modeling improves predictive performance for both network and structural outcomes, as evidenced by lower mean squared prediction error and shorter or comparable predictive intervals.  

Several promising directions for future research emerge from this work. First, on the methodological front, it would be valuable to extend the proposed framework to include additional imaging modalities (e.g., diffusion MRI, task-based fMRI, or molecular imaging), enabling comprehensive multi-modal integration within a unified probabilistic model and allowing for pairwise association testing between imaging modalities. Second, broadening the current cross-sectional approach to longitudinal or developmental data would make it possible to study how structure-function associations change over time or during disease progression. Lastly, there is a need to develop scalable inference algorithms suited for very high-dimensional networks and large cohorts, potentially drawing on techniques such as variational inference, stochastic gradient MCMC, or distributed computation.

\appendix

\section{Full Posterior Conditionals}\label{app:posteriors}

This section provides details of posterior computation for all parameters in the Bayesian hierarchical model. For brevity, the conditionals below omit the covariate terms $x_i\beta_{x,G}$ and $x_i\beta_{x,S}$; when covariates are present, each residual is adjusted accordingly.
\[
\begin{aligned}
a_{i,v,v'} &= \mu_i + \eta_{i,v}^\top \Delta \eta_{i,v'} + \epsilon_{i,v,v'}, 
\quad \epsilon_{i,v,v'} \sim \mathcal{N}(0, \sigma^2), \\
y_{i,v,k} &= \theta_{i,k} + \eta_{i,v}^\top \Omega_k \alpha_{i,k} + \delta_{i,v,k}, 
\quad \delta_{i,v,k} \sim \mathcal{N}(0, \phi_k^2), \\
\Delta &= \mathrm{diag}(\lambda_1, \dots, \lambda_R), 
\quad \Omega_k = \mathrm{diag}(\omega_{k,1}, \dots, \omega_{k,R}).
\end{aligned}
\]

We use the following parametrization for the updates of parameters \( \gamma_r \) and \( \psi_{k,r} \): we rewrite the model as \( a = w + K \lambda_r + \epsilon \) (for \( \gamma_r \)) or \( y = w + K \omega_{k,r} + \delta \) (for \( \psi_{k,r} \)), where \( a \in \mathbb{R}^{N m} \), \( y \in \mathbb{R}^{N V} \), with \( m = \frac{V(V-1)}{2} \). Vectors \( a \) and \( y \) are formed by stacking all subjects \( i = 1,\dots,N \), and all pairs \( v < v' \) (for \( a \)), or all voxels \( v \) (for \( y \)). Corresponding design matrix \( K \) stacks the terms involving \( \eta_{i,v,r}, \eta_{i,v',r} \) (for \( \lambda_r \)) or \( \eta_{i,v,r}, \alpha_{i,k,r} \) (for \( \omega_{k,r} \)), while \( w \) collects contributions from all remaining parameters (e.g., \( \mu_i, \theta_{i,k}, \lambda_{r'}, \omega_{k,r'} \), with \( r' \neq r \)).

\bigskip

Priors are specified as:

\[
\begin{aligned}
\eta_{i,v} &\sim \mathcal{N}(0, \sigma_\eta^2 I), 
\quad \alpha_{i,k} \sim \mathcal{N}(0, \sigma_\alpha^2 I), \\
\mu_i &\sim \mathcal{N}(0, 1), 
\quad \theta_{i,k} \sim \mathcal{N}(0, 1), \\
\sigma^2, \phi_k^2, \sigma_\eta^2, \sigma_\alpha^2 &\sim \mathrm{Inverse\text{-}Gamma}(a, b), \\
\lambda_r \mid \gamma_r &\sim \mathcal{N}(0, \tau_r^{-1}), 
\quad \gamma_r \sim \mathrm{Bern}(\pi_\lambda), 
\quad \tau_r \sim \mathrm{Gamma}(q^{3(r-1)}, q^{2(r-1)}), \\
\omega_{k,r} \mid \psi_{k,r} &\sim \mathcal{N}(0, \nu_{k,r}^{-1}), 
\quad \psi_{k,r} \sim \mathrm{Bern}(\pi_{\psi,k}), 
\quad \nu_{k,r} \sim \mathrm{Gamma}(q^{3(r-1)}, q^{2(r-1)}), \\
\pi_\lambda &\sim \mathrm{Beta}(a_\lambda, b_\lambda), 
\quad \pi_{\psi,k} \sim \mathrm{Beta}(a_\psi, b_\psi).
\end{aligned}
\]

\bigskip

The hierarchical model specified above leads to straightforward Gibbs sampling with full conditionals obtained as following:

\begin{itemize}

\item[•] \( \eta_{i,v} \mid - \sim \mathcal{N}(\mu_{i,v}, \Sigma_{i,v}) \) \\[0.5em]
\( \Sigma_{i,v} = \left( \frac{1}{\sigma^2} \sum_{v' \neq v} \Delta \eta_{i,v'} (\Delta \eta_{i,v'})^\top 
+ \sum_k \frac{1}{\phi_k^2} \Omega_k \alpha_{i,k} (\Omega_k \alpha_{i,k})^\top 
+ \frac{1}{\sigma_\eta^2} I \right)^{-1} \) \\[0.5em]
\( \mu_{i,v} = \Sigma_{i,v} \left( \frac{1}{\sigma^2} \sum_{v' \neq v} (a_{i,v,v'} - \mu_i) \Delta \eta_{i,v'} 
+ \sum_k \frac{1}{\phi_k^2} (y_{i,v,k} - \theta_{i,k}) \Omega_k \alpha_{i,k} \right) \)

\item[•] \( \alpha_{i,k} \mid - \sim \mathcal{N}(\mu_{\alpha_{i,k}}, \Sigma_{\alpha_{i,k}}) \) \\[0.5em]
\( \Sigma_{\alpha_{i,k}} = \left( \frac{1}{\phi_k^2} \sum_v \Omega_k^\top \eta_{i,v} \eta_{i,v}^\top \Omega_k 
+ \frac{1}{\sigma_\alpha^2} I \right)^{-1} \) \\[0.5em]
\( \mu_{\alpha_{i,k}} = \Sigma_{\alpha_{i,k}} \left( \frac{1}{\phi_k^2} \sum_v \Omega_k^\top \eta_{i,v} (y_{i,v,k} - \theta_{i,k}) \right) \)

\item[•] \( \mu_i \mid - \sim \mathcal{N} \left( \frac{B}{A}, \frac{1}{A} \right) \), \quad
\( A = \frac{V(V-1)/2}{\sigma^2} + 1 \), \quad
\( B = \frac{1}{\sigma^2} \sum_{v < v'} (a_{i,v,v'} - \eta_{i,v}^\top \Delta \eta_{i,v'}) \)

\item[•] \( \theta_{i,k} \mid - \sim \mathcal{N} \left( \frac{B}{A}, \frac{1}{A} \right) \), \quad
\( A = \frac{V}{\phi_k^2} + 1 \), \quad
\( B = \frac{1}{\phi_k^2} \sum_v (y_{i,v,k} - \eta_{i,v}^\top \Omega_k \alpha_{i,k}) \)

\item[•] \( \lambda_r \mid - \sim 
\begin{cases}
\mathcal{N} \left( \frac{B_r}{A_r}, \frac{1}{A_r} \right), & \gamma_r = 1 \\
0, & \gamma_r = 0
\end{cases} \) \\[0.5em]
\( A_r = \frac{1}{\sigma^2} \sum_i \sum_{v < v'} (\eta_{i,v,r} \eta_{i,v',r})^2 + \tau_r \) \\[0.5em]
\( B_r = \frac{1}{\sigma^2} \sum_i \sum_{v < v'} \eta_{i,v,r} \eta_{i,v',r} \left( a_{i,v,v'} - \mu_i - \sum_{r' \ne r} \lambda_{r'} \eta_{i,v,r'} \eta_{i,v',r'} \right) \)

\item[•] \( P(\gamma_r = 1 \mid -) = 
\frac{
\pi_\lambda \cdot \mathcal{N} \left( a \mid w, \sigma^2 I + \tau_r^{-1} K K^\top \right)
}{
\pi_\lambda \cdot \mathcal{N} \left( a \mid w, \sigma^2 I + \tau_r^{-1} K K^\top \right) 
+ (1-\pi_\lambda) \cdot \mathcal{N}(a \mid w, \sigma^2 I)
} \)

\item[•] \( \tau_r \mid - \sim 
\begin{cases}
\mathrm{Gamma} \left( \frac{1}{2} + q^{3(r-1)}, \ q^{2(r-1)} + \frac{1}{2} \lambda_r^2 \right), & \gamma_r = 1 \\
\mathrm{Gamma} \left( q^{3(r-1)}, \ q^{2(r-1)} \right), & \gamma_r = 0
\end{cases} \)

\item[•] \( \pi_\lambda \mid - \sim \mathrm{Beta} \left( a_\lambda + \sum_r \gamma_r, \ b_\lambda + R - \sum_r \gamma_r \right) \)

\item[•] \( \omega_{k,r} \mid - \sim 
\begin{cases}
\mathcal{N} \left( \frac{B_{k,r}}{A_{k,r}}, \frac{1}{A_{k,r}} \right), & \psi_{k,r} = 1 \\
0, & \psi_{k,r} = 0
\end{cases} \) \\[0.5em]
\( A_{k,r} = \frac{1}{\phi_k^2} \sum_i \sum_v (\eta_{i,v,r} \alpha_{i,k,r})^2 + \nu_{k,r} \) \\[0.5em]
\( B_{k,r} = \frac{1}{\phi_k^2} \sum_i \sum_v (y_{i,v,k} - \theta_{i,k} - \sum_{r' \ne r} \omega_{k,r'} \eta_{i,v,r'} \alpha_{i,k,r'}) \cdot \eta_{i,v,r} \alpha_{i,k,r} \)

\item[•] \( P(\psi_{k,r} = 1 \mid -) = 
\frac{
\pi_{\psi,k} \cdot \mathcal{N} \left( y \mid w, \phi_k^2 I + \nu_{k,r}^{-1} K K^\top \right)
}{
\pi_{\psi,k} \cdot \mathcal{N} \left( y \mid w, \phi_k^2 I + \nu_{k,r}^{-1} K K^\top \right) 
+ (1-\pi_{\psi,k}) \cdot \mathcal{N}(y \mid w, \phi_k^2 I)
} \)

\item[•] \( \nu_{k,r} \mid - \sim 
\begin{cases}
\mathrm{Gamma} \left( \frac{1}{2} + q^{3(r-1)}, \ q^{2(r-1)} + \frac{1}{2} \omega_{k,r}^2 \right), & \psi_{k,r} = 1 \\
\mathrm{Gamma} \left( q^{3(r-1)}, \ q^{2(r-1)} \right), & \psi_{k,r} = 0
\end{cases} \)

\item[•] \( \pi_{\psi,k} \mid - \sim \mathrm{Beta} \left( a_\psi + \sum_r \psi_{k,r}, \ b_\psi + R - \sum_r \psi_{k,r} \right) \)

\item[•] \( \sigma^2 \mid - \sim \mathrm{Inverse\text{-}Gamma} \left( a + \frac{N V(V-1)}{4}, \ b + \frac{1}{2} \sum_i \sum_{v < v'} (a_{i,v,v'} - \mu_i - \eta_{i,v}^\top \Delta \eta_{i,v'})^2 \right) \)

\item[•] \( \phi_k^2 \mid - \sim \mathrm{Inverse\text{-}Gamma} \left( a + \frac{N V}{2}, \ b + \frac{1}{2} \sum_i \sum_v (y_{i,v,k} - \theta_{i,k} - \eta_{i,v}^\top \Omega_k \alpha_{i,k})^2 \right) \)

\item[•] \( \sigma_\eta^2 \mid - \sim \mathrm{Inverse\text{-}Gamma} \left( a + \frac{N V R}{2}, \ b + \frac{1}{2} \sum_i \sum_v \eta_{i,v}^\top \eta_{i,v} \right) \)

\item[•] \( \sigma_\alpha^2 \mid - \sim \mathrm{Inverse\text{-}Gamma} \left( a + \frac{N P R}{2}, \ b + \frac{1}{2} \sum_i \sum_k \alpha_{i,k}^\top \alpha_{i,k} \right) \)

\end{itemize}


\bibliography{bibliography.bib}

\end{document}